%% file: main.tex
\documentclass[submission,copyright,creativecommons]{eptcs}
\providecommand{\event}{LSFA 2026} 

\usepackage{iftex}

\ifpdf
  \usepackage{underscore}         
  \usepackage[T1]{fontenc}        
\else
  \usepackage{breakurl}           
\fi
\usepackage{packages}
\usepackage{mycommands}

\title{Collusion Relations and their Applications to Balance Theory}
\author{Jean-Baptiste Joinet
\institute{Université Jean Moulin Lyon 3, IRPhiL, Lyon, France}
\email{jean-baptiste.joinet@univ-lyon3.fr}
\and
Carlos Olarte
\institute{Université Sorbonne Paris Nord, LIPN, CNRS, UMR 7030, F-93430, Villetaneuse, France}
\email{olarte@lipn.univ-paris13.fr}
\thanks{
This work was funded by French-Brazilian CAPES-COFECUB Research Programme
    ``Logic and Intelligibility of Computational Processes'' (2024-2027). Olarte
    gratefully acknowledge also the support from the NATO Science for Peace and
    Security Programme through grant number  G6133 (project SymSafe) and the SGR
    project PROMUEVA (BPIN 2021000100160) under the supervision of Minciencias
    Colombia. 
}
}
\def\titlerunning{Collusion Relations and their Applications to Balance Theory}
\def\authorrunning{Jean-Baptiste Joinet \& Carlos Olarte}
\begin{document}
\maketitle

\begin{abstract}
\input{abstract}

\end{abstract}

\section{Introduction}\label{sec:intro}
\input{intro}

\section{Quadrangular Relations}\label{sec:relations}
\input{relations}

\section{The Agonal Interpretation}\label{sec:agonal}
\input{agonal}
\input{protection}

\section{Collusions and Balance Theory}\label{sec:balance}
\input{balance}

\section{A Modal Characterization of Collusions}\label{sec:modal}
\input{logic}

\section{Concluding Remarks}\label{sec:conc}

\input{conclusion}

\nocite{*}
\bibliographystyle{eptcs}
\bibliography{biblio}

\end{document}

%% file: abstract.tex
We study quadrangular properties of binary relations on a set $X$~--i.e., properties defined on
configurations of four elements--~within an agonistic interpretation, where
$xRy$ is interpreted as $x$ ``attacks''~$y$. Such relations
induce a suitable notion of ``protection,'' and we provide necessary and
sufficient conditions for this notion to be consistent. We 
characterize the balance property in signed frames in terms of a specific
quadrangular property, namely collusivity. In this way, we generalize a classical
result in balance theory by offering an alternative method for determining
whether a network is polarized. That is, one can identify well-formed groups of
agents that agree with one another within the same group (a set of allies)
while disagreeing with, or attacking, agents outside the group. Furthermore, we
extend the balance theorem to non-symmetric relations, thereby relaxing a
condition required in standard balance theory. We conclude by giving a modal
characterization of collusive frames, together with corresponding rules in a
labeled sequent calculus, and we show that previous modal characterizations of
balance are derivable within this system.

%% file: intro.tex
Methods for reasoning about group polarization have recently attracted the
attention within the proof theory community, since modal logics have
proven to be useful for analyzing such phenomena. Consider for instance the
Facebook epistemic logic~\cite{DBLP:conf/tark/SeligmanLG13} endowed 
with a symmetric ``friendship'' relation; Tweeting
logic~\cite{DBLP:conf/lori/XiongASZ17} (formalizing announcements in a
network);  the logic for reasoning about social beliefs~\cite{DBLP:journals/synthese/LiuSG14}; the positive-negative modal logic
in ~\cite{DBLP:journals/logcom/PedersenSA21} to reason about friendship and
enmity relations, etc. We may also mention the game-semantic approaches that we
have explored in this context
in~\cite{DBLP:conf/lpar/FreimanOPF24,DBLP:conf/csl/PimentelOLFF25}.

This paper focuses on the problem of \emph{balance} in networks of agents, where
relations may be either positive (e.g., friendship or agreement) or negative
(e.g., enmity or disagreement). The theory of Balance originates in social
psychology~\cite{myers1976group,isenberg1986group} and political
philosophy~\cite{sunstein1999law,sunstein2007group}. It studies how such
relations evolve toward configurations in which triads of agents form 
``stable'' patterns. For instance, if $x$ is a friend of $y$ and $y$ is a
friend of $z$, then the network tends to evolve toward a state in which $x$ and
$z$ are also friends (capturing the intuition that the friend of my friend
should be my friend). We contribute to this line of research in
several ways, including a novel characterization of balance, relaxing a
condition required in ``standard'' Balance theory, and the development of new proof
methods.

The methods proposed in this paper are grounded in the study of quadrangular relations (\Cref{sec:relations}), i.e., properties of a relation $R \subseteq X \times X$ (for a set $X$) expressible by quantifying over four elements of $X$. We focus in particular on \emph{collusive} relations, introduced in~\cite{joinet:hal-02369662,Joinet2021}, defined by the quadrangular condition
$\forall x,y,z,w\in X,\big((xRy \wedge xRz \wedge wRy) \imp wRz\big)$.
Intuitively, whenever $x$ and $w$ share a target $y$, every target of $x$ is also a target of $w$.
Collusions extend equivalence relations, which are precisely reflexive
collusions. Moreover, the standard equi-quotient construction (partitions
induced by equivalence relations) appears as a special case of a more general
quotientation process induced by collusions. In fact, this process is fully
general: collusions characterize
quotients~\cite{joinet:hal-02369662,Joinet2021}. Moreover, the (collusional)
quotients $X/R$ and $X/R^{-1}$ (if $R$ is collusive then its inverse  $R^{-1}$ is collusive too) can be recovered as standard equi-quotients via
two dual \emph{relational indiscernibility} predicates (see~\Cref{eq:rels}).

In~\Cref{sec:agonal}, we endow a binary
relation $R$ with an agonal interpretation, where $xRy$ is understood as $x$ attacks $y$, $x$ disagrees
with $y$, or $x$ is an enemy of $y$. We then provide
necessary and sufficient conditions under which $R$ induces a meaningful notion
of \emph{protection} against attacks. Intuitively, $x$ protects $z$ if $x$ defends $z$
against all its enemies.
Here again a quadrangular property, namely co-confluence,
plays a central role to ensure the \emph{consistency} of the protection relation, in
the sense that no agent can simultaneously protect and attack another.

In~\Cref{sec:balance}, we show a tight connection between collusive relations and 
Balance theory. The classical balance theorem provides three equivalent
characterizations of stable networks (\Cref{sec:classic:balance}). First, a network is stable if it can be
``completed'' by assigning signs to missing edges so that all triads are
stable. Second, this holds iff the network can be partitioned into
groups such that agents within the same group are not negatively  related, while
agents in different groups are not positively related. Third, this is equivalent to
the condition that all cycles in the network satisfy certain 
constraints. In this paper we propose a fourth characterization
(\Cref{sec:new:balance}): the previous conditions are equivalent to check whether 
the positive and negative relations are collusive.

Our approach generalizes standard Balance theory in two complementary
directions. \textbf{First}, classical Balance theory assumes two given
relations (positive and negative) satisfying a fixed set of axioms. In
contrast, we begin with a single relation, the agonal
(attack/disagreement) relation, without imposing any a priori axiom. The
positive relation is then derived from this attack relation, as shown
in~\Cref{sec:balance}. Subsequently, we identify the properties that the attack
relation must satisfy in order 
to have a ``natural'' agonistic interpretation and how this interpretation leads to balanced configurations. 
\textbf{Second}, in Balance theory, both relations are assumed to be symmetric;
additionally, the negative relation is anti-reflexive, while the positive 
relation is reflexive. In our case, anti-reflexivity of the negative relation
is assumed since it has a natural agonistic interpretation. However, 
symmetry is not required (\Cref{sec:b-gen}). In fact, we show that irreflexive
collusive relations (not necessarily symmetric)  induce weakly
balanced networks, while symmetric irreflexive collusive relations yield
strongly balanced frames. These results provide further evidence of the
close connection between collusive relations and stable configurations.

As a final contribution, \Cref{sec:modal} provides a modal
characterization of collusive relations. Following the approach
of~\cite{DBLP:journals/jphil/Negri05,DBLP:journals/apal/MarinMPV22}, the
proposed axiom gives rise to an inference rule in a labeled sequent calculus.
We show that the resulting cut-free system can be used to establish key
properties of balanced networks. In particular, we demonstrate that previous modal 
characterizations of balance in~\cite{DBLP:journals/logcom/PedersenSA21} can
be derived within our system. This provides a new proof-theoretic tool for
reasoning about balance in networks.



%% file: relations.tex
This section explores some properties of relations that will be fundamental for
the results in the forthcoming sections. Along this paper we will assume an
arbitrary set $X$. Given a relation $R\subseteq X\times X$, we write
$xRy$ whenever $(x,y)\in R$. We say that $x\in X$ is a \emph{source} or
\emph{initial} element if its indegree is zero, i.e. $\forall y\in X. \neg
yRx$. Typical properties of binary relations include:

\begin{itemize}
    \item \emph{reflexivity}: $\forall x\in X. xRx$, and 
        \emph{irreflexivity} (or \emph{anti-reflexivity}): $\forall x\in X. \neg xRx$. 
    \item \emph{seriality} (or \emph{totality}):  $\forall x\in X . \exists y\in X. xRy$, and 
             \emph{surjectivity}:  $\forall x\in X . \exists y\in X. yRx$.
    \item \emph{transitivity}: $\forall x,y,z\in X.\left((xRy \wedge yRz) \imp xRz \right)$
        and \emph{anti-trans.}: $\forall x, y, z \in X.(xRy\wedge yRz)\imp \neg xRz$.
\end{itemize}

We will be interested in properties of relations defined on ``quadrilaterals,''
i.e., properties obtained by quantifying on four elements of the set $X$ as
follows. 

\begin{definition}[Quadrangular Relations]\label{def:rel}
    Let $R\subseteq X\times X$. We will say that a property of  the relation
    $R$ is  \emph{quadrangular} iff it is definable by a
    formula, where $R$ is the
    unique free variable, of the following shape:

\[
    \forall x,y,z,w\in X. \left( \left( A_1(x,y) \wedge A_2(x,z) \wedge A_3(y,w)\right) \imp A_4(z,w) \right)
\]

\noindent where $x,y,z,w$ are different variables and the four atomic formulas
of the form $A_i(v,v')$, for $1 \leq i\ \leq 4$, are either $vRv'$ or $v'Rv$
(where $v\neq v'$ and $v,v'\in\{x,y,z,w\}$). 

\end{definition}

According to the above definition, there exist eight different possible
quadrangular properties. Below we list three of them 
that will play an important role in the next sections (see~\Cref{fig:rel}):

\begin{enumerate}
\item $R$ is $\mathbf{Q_1} \defsym \forall x,y,z,w\!\in\!X\,\big((xRy \wedge xRz \wedge yRw) \rightarrow zRw\big)$ ($R$ is called \textbf{confluent})
\item $R$ is $\mathbf{Q_2} \defsym \forall x,y,z,w\!\in\!X\,\big((xRy \wedge zRx \wedge wRy) \rightarrow zRw\big)$ ($R$ is called \textbf{co-confluent} or \textbf{protective})
\item $R$ is $\mathbf{Q_3} \defsym \forall x,y,z,w\!\in\!X\,\big((xRy \wedge xRz \wedge wRy) \rightarrow wRz\big)$ ($R$ is called  \textbf{collusive})
\end{enumerate}

\input{fig-rel}

The property of confluence is well-known.
 Co-confluence represents the
situation where, if there are two possible paths leading to some element $y$
(in the figure one from $z$ and one from $w$), then there is a common
origin for those paths (in the figure $z$). The second name we have used for co-confluence, i.e.,
\emph{protective}, will become clear in the following sections.
As a mnemonic, one 
may think of a collusive relation as a relation where whenever $x$ and $w$
have a common target $y$, then every possible target of $x$ (e.g. $z$ in the
figure) is also a target of $w$. 

Among the eight properties, only
\emph{collusiveness} can be naturally extended to the more general case of a
binary relation $R\subseteq X\times Y$ on two different sets $X$ and $Y$.
This becomes clear when looking at the corresponding diagram 
in~\Cref{fig:rel}, 
where in each node,  either the outdegree is zero or the indegree is zero.
Since the nodes that are sources of arrows are never targets of
arrows,  the vertices can be separated into two disjoint classes: 
the set $X$ of elements that  act as 
sources, and the set  $Y$ of elements  that act as targets.

\begin{definition}[Collusion]\label{def:collusion}
    We say that $R$ is a collusion if $R$ is collusive, total and surjective. 
\end{definition}

%% file: fig-rel.tex
\begin{figure}[t]
\centering
\resizebox{.70\textwidth}{!}{
\begin{tabular}{cccc}

\begin{tikzpicture}[>=stealth]
\node (x) at (0,1) {$x$};
\node (y) at (-1,0) {$y$};
\node (z) at (1,0) {$z$};
\node (w) at (0,-1) {$w$};

\draw[->] (x) -- (y);
\draw[->] (x) -- (z);
\draw[->] (y) -- (w);
\draw[->,dashed] (z) -- (w);

\node at (0,-1.6) {$Q_1$ (Confluence)};
\end{tikzpicture}
&
\begin{tikzpicture}[>=stealth]
\node (x) at (0,1) {$x$};
\node (y) at (-1,0) {$y$};
\node (z) at (1,0) {$z$};
\node (w) at (0,-1) {$w$};

\draw[->] (x) -- (y);
\draw[->] (z) -- (x);
\draw[->] (w) -- (y);
\draw[->,dashed] (z) -- (w);

\node at (0,-1.6) {$Q_2$ (Co-Confluence)};
\end{tikzpicture}
&
\begin{tikzpicture}[>=stealth]
\node (x) at (0,1) {$x$};
\node (y) at (-1,0) {$y$};
\node (z) at (1,0) {$z$};
\node (w) at (0,-1) {$w$};

\draw[->] (x) -- (y);
\draw[->] (x) -- (z);
\draw[->] (w) -- (y);
\draw[->,dashed] (w) -- (z);

\node at (0,-1.6) {$Q_3$(Collusiveness)};
\end{tikzpicture}
&
\begin{tikzpicture}[>=stealth]
\node (x) at (0,1) {$x$};
\node (y) at (-1,0) {$y$};
\node (z) at (1,0) {$z$};
\node (w) at (0,-1) {$w$};

\draw[->] (x) -- (y);
\draw[->] (x) -- (z);
\draw[->] (y) -- (w);
\draw[->,dashed] (w) -- (z);

\node at (0,-1.6) {$Q_4$};
\end{tikzpicture}
\\
\begin{tikzpicture}[>=stealth]
\node (x) at (0,1) {$x$};
\node (y) at (-1,0) {$y$};
\node (z) at (1,0) {$z$};
\node (w) at (0,-1) {$w$};

\draw[->] (x) -- (y);
\draw[->] (x) -- (z);
\draw[->] (w) -- (y);
\draw[->,dashed] (z) -- (w);

\node at (0,-1.6) {$Q_5$};
\end{tikzpicture}
&
\begin{tikzpicture}[>=stealth]
\node (x) at (0,1) {$x$};
\node (y) at (-1,0) {$y$};
\node (z) at (1,0) {$z$};
\node (w) at (0,-1) {$w$};

\draw[->] (x) -- (y);
\draw[->] (z) -- (x);
\draw[->] (y) -- (w);
\draw[->,dashed] (z) -- (w);

\node at (0,-1.6) {$Q_6$};
\end{tikzpicture}
&
\begin{tikzpicture}[>=stealth]
\node (x) at (0,1) {$x$};
\node (y) at (-1,0) {$y$};
\node (z) at (1,0) {$z$};
\node (w) at (0,-1) {$w$};

\draw[->] (x) -- (y);
\draw[->] (z) -- (x);
\draw[->] (y) -- (w);
\draw[->,dashed] (w) -- (z);

\node at (0,-1.6) {$Q_7$};
\end{tikzpicture}
&
\begin{tikzpicture}[>=stealth]
\node (x) at (0,1) {$x$};
\node (y) at (-1,0) {$y$};
\node (z) at (1,0) {$z$};
\node (w) at (0,-1) {$w$};

\draw[->] (x) -- (y);
\draw[->] (z) -- (x);
\draw[->] (w) -- (y);
\draw[->,dashed] (w) -- (z);

\node at (0,-1.6) {$Q_8$};
\end{tikzpicture}
\end{tabular}
}
\caption{
    Quadrangular relations. 
Solid arrows represent the antecedent of the implications in~\Cref{def:rel} and dashed arrows the consequent.\label{fig:rel}}
\end{figure}

%% file: agonal.tex
In this section we interpret a binary relation $R\subseteq X\times X$ as an
``attack'' relation~\S\ref{sec:attack}. We then give necessarily and sufficient
conditions to define a ``protection'' relation~\S\ref{sec:protection} induced
by $R$. Properties relating these relations are
established in~\S\ref{sec:properties} leading to a suitable notion of
a \emph{consistent} protection in~\S\ref{sec:cons-prop}.
As we shall see, collusive and co-confluent relations 
will play a central role to make
our agonal (from the Greek agōn, relating to contest or conflict, as in ``antagonism'')
interpretation sound. 

\subsection{The attack relation}\label{sec:attack}

We will endow the relation $R$ with an \emph{agonal}  interpretation by considering $R$ as an
``attack'' relation in a broad sense. For instance, we may interpret $xRy$ in different ways 
as in: 
argument $x$ is an \emph{attack} against argument $y$;
agent $x$ \emph{criticizes} (or \emph{disagrees} with) agent $y$;
agent $x$ is an \emph{enemy} of $y$;
animal $x$ is a \emph{predator} of animal $y$, etc.
We do not assume a priori any specific property for attack
relations. However, there might be natural desiderata that such 
relations should satisfy to fulfill the intended agonal interpretation. \\

\noindent \textbf{Self-attacks}. 
    If we admit that $xRx$ for some $x$, this means that the argument $x$
    attacks itself, that the agent $x$ criticizes itself, that $x$ is an enemy of
    itself, that the animal $x$ is a self-predator, etc. One may wish to avoid
    the possibility of such
    ``self-agonality'' by prohibiting it, which would correspond to require
    that $R$ is \emph{irreflexive}. However, one could also argue that self-agonality
    exists in different settings. For instance,  a paradoxical argument may
    well be considered as self-refuting (attacking itself); the auto-critic is
    a common phenomena for agents; and even if auto-predation is probably an
    extremely rare phenomena in life, we know that suicide exists.

\noindent \textbf{Mutual attacks}. 
    One may require agonality to be ``mutual'' or ``reciprocal,'' and hence $R$
    to be \emph{symmetric}. For instance, if agent $x$ does not agree with agent $y$, it
    seems perfectly normal to assume also that $y$ does not agree with
     $x$. As another example, assume that $X$ is a set of formulas. One
    could consider that formula $A$ attacks $B$ whenever $B$ is provably
    equivalent to $\neg A$. In a classical world, by contraposition, this would
    also mean that formula $B$ attacks $A$.
    In some cases, however, the symmetry of  $R$ is not desirable. For
    instance, if $X$ is a set of argumentations (derivations which may be
    correct or incorrect as well), then one may consider that an argumentation
    $a$ attacks argumentation $b$ (of type/conclusion $B$) which includes a
    sub-derivation $c$ (of type/conclusion $C$), if $a$ is of type/conclusion
    $D$, where $D$ is provably equivalent to $\neg C$. In this case, 
    it is far to be clear that attacks should be reciprocal. 
    As a further example where symmetry for agonality does not fit
    the reality, consider the predator relation over the set of animals.

\noindent  \textbf{Allied attacks}. 
    As ``union makes power,'' and agents typically act together to improve their
chances of success, it is natural that the attack
relation should be \emph{collusive} in this setting. For instance, if agents $x$ and $w$ 
act together against 
a common
``enemy'' $y$, then it is reasonable that any new enemy of $x$ should also be an enemy of $w$. 
Similarly, for two predators hunting together and pursuing a common prey,
any new prey followed by one of them naturally becomes prey for
the other as well.

In the next section, we analyze the implications of imposing the above
mentioned properties (irreflexivity, symmetry, collusiveness, etc.)  on the attack
relation. We also explore the 
properties inherited by the induced
protection relation defined below. 

%% file: protection.tex
\subsection{The Protection Relation}\label{sec:protection}

Since we interpret the relation $xRy$ as $x$ ``attacks'' $y$, 
a natural question is what the notion of 
``protection'' against such attacks could be.
This subsection defines the notion of protection induced by 
the attack relation and determines the properties that $R$ must satisfy 
for this notion to be ``consistent''.

We first define the notion of protection where, intuitively, 
``$x$ protects $z$'' whenever $x$ defends $z$ from all its enemies, 
i.e., $x$ attacks all $z$'s enemies. More precisely: 

\begin{definition}[Protection $\protects$]\label{def:protection}
 Let $R$ be a binary relation over $X$ (an attack relation). The protection
 relation induced by $R$ is the binary relation $\protects$ over $X$ defined
 by:  
$
        x\protects z \defsym \forall y\in X.(yRz\imp xRy)
$.
\end{definition}

We shall read  $x\protects z$ as  ``$x$ protects $z$,''  ``$x$ is a protector
of $z$,'' ``$z$ is protected by $x$,'' etc.  This means that the interpretation of
$x\protects z$ inherits the agonal flavor of $R$'s interpretation.

As a further intuition, consider the following diagram:
\begin{center}
\begin{tikzpicture}[>=stealth]
\node (y1) at (0,1) {$y_1$};
\node (y2) at (0,0) {$y_2$};
\node (x) at (-1,0) {$x$};
\node (z) at (1,0) {$z$};
\node (y3) at (0,-1) {$y_3$};

\draw[->] (x) -- (y1);
\draw[->] (x) -- (y2);
\draw[->] (x) -- (y3);
\draw[->] (y1) -- (z);
\draw[->] (y2) -- (z);
\draw[->] (y3) -- (z);
\end{tikzpicture}
\end{center}

With the agonal interpretations in mind (refutation, attack, predating, etc.), this  diagram can be read as:
since argument $x$ attacks every argument $y_i$, and those in turn attack argument $z$, then  argument $x$ protects argument $z$;
since agent $x$ criticizes every agent $y_i$, and those   criticize $z$, agent $x$ protects agent $z$;
since animal $x$ preys upon all animal $y_i$, and those are predators of animal $z$, animal $x$ protects  $z$; etc.

However, there is an undesirable phenomena in the definition of the relation $\protects$ 
when we consider initial elements
(see~\Cref{sec:relations}), i.e., elements of $X$ that nobody attacks. More
precisely: 

\begin{observation}[Everybody protects initial elements]\label{obs:all-all}
If $z$ is initial for $R$, then $\forall x\in X .~x \protects z$
\end{observation}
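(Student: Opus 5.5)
This is an immediate consequence of unfolding \Cref{def:protection}, so the plan is a direct vacuous-truth argument rather than any use of the quadrangular properties of \Cref{sec:relations}.

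First, fix $z$ initial for $R$. By the definition of initial elements in \Cref{sec:relations}, this means $\forall y\in X.\ \neg\, yRz$.

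Next, take an arbitrary $x\in X$. To establish $x\protects z$, I must show $\forall y\in X.\,(yRz\imp xRy)$. So I take an arbitrary $y\in X$ and assume $yRz$. This contradicts initiality of $z$, and from the contradiction $xRy$ follows (ex falso). Since $y$ was arbitrary, this gives $x\protects z$. Since $x$ was arbitrary, the universal statement follows.

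No step is a real obstacle. The only point worth flagging is that the argument uses no hypothesis on $R$ at all: it holds for every relation, reflexive or not, collusive or not. This is exactly why the observation signals a defect of \Cref{def:protection}: the universal quantifier over attackers of $z$ is empty, so protection of $z$ holds vacuously. In particular, if $xRz$ also holds, then $x$ both attacks and protects $z$. That anticipates the consistency issue treated later.
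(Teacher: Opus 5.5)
Your proof is correct and is the same vacuous-truth unfolding of \Cref{def:protection} that the paper uses; the paper just phrases it by contradiction, observing that no witness $y$ with $yRz$ and $\neg xRy$ can exist when $z$ is initial. Your closing remark is wrong, though: because $z$ is initial, $xRz$ can never hold, so this virtual protection can never make $x$ both attack and protect $z$. The paper notes exactly this (initial elements never violate consistency) and traces the consistency problem to self-attacks instead.
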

\begin{proof}
Let $z\in X$ be 
initial for $R$ and, to obtain a contradiction, assume that there exists
$x\in X$ s.t. $x\nprotects z$. By definition, there
exists $y\in X$ s.t. $yRz$ and $\neg xRy$. But no such $y$ can
exist, as $z$ is initial.
\end{proof}

The above observation can be read as: ``an element that is attacked by no one
is protected by everyone.'' As this may appear somewhat
counter-intuitive\footnote{A less counter-intuitive reading is that $x$
    protects $z$ against \emph{all} its enemies. If $z$ has no enemies, then
vacuously every $x$ protects $z$ according to \Cref{def:protection}.}, one may
instead require the relation $R$ to be surjective, thereby guaranteeing that
there are no initial elements. For some agonal interpretations, it does not
seem intuitive to require that every element of $X$ be attacked by at least one
other element. This additional restriction, however, makes perfect sense in a
logical context (in the dialectical setting of argumentation), since for every
formula $A$ there exists a formula $\neg A$ that ``attacks'' $A$. Note also
that the theory of collusions provides intrinsic reasons to require
surjectivity (see above for the presentation of the notion of ``collusion'' and
various propositions where surjectivity appears to be a necessary condition).

Instead of demanding $R$ to be surjective, we could refine the notion of protection as follows. 

\begin{definition}[Actual protection induced by $R$]
The actual protection relation induced by $R$ is the binary relation $\aprotects$ over $X$ defined by:  
$ 
    x\aprotects z \defsym (\exists y\in X. yRz) \wedge \forall y\in X.(yRz\imp xRy)
$. 
\end{definition}

When $x\aprotects z$, we say that ``$x$ actually protects $z$,'' or that ``$z$
is actually protected by $x$,'' etc. Clearly, when $x \aprotects z$, $z$ cannot
be initial for $R$. When $x\protects z$ and $z$ is initial (i.e., $x
\not\aprotects z$) we say that $x$ offers a ``virtual'' protection to $z$ (a
protection that, indeed, $z$ does not require). 

\subsection{Properties of Attack and Protection}\label{sec:properties}

In this section we explore some properties of the above defined relations.
First, we show that, in general, attack and protection are completely different
notions. In particular, they are not dual notions, i.e., the 
two notions cannot be considered as opposed/contradictory ones. 

\begin{observation}\label{obs:RandP}
    Depending on the relation $R$ considered, one may well have:

\begin{enumerate*}[label=(\roman*), itemjoin={\quad}]
\item $xRy$ and $x\protects y$
\item $xRy$ and $x\nprotects y$
\item $\neg x{R}y$ and $x\protects y$
\item $\neg x{R}y$ and $x\nprotects y$
\end{enumerate*}
\end{observation}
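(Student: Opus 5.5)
The statement asks for existence of four configurations, so I will prove it by exhibiting, for each of the cases (i)–(iv), a small explicit relation $R$ on a set $X$ with two or three elements. In each case I verify the required fact about $R$ directly and then unfold \Cref{def:protection} for the required fact about $\protects$. Recall that $x\protects y$ holds iff every attacker of $y$ is attacked by $x$. So $x\nprotects y$ amounts to exhibiting one $u$ with $uRy$ and $\neg xRu$, while $x\protects y$ amounts to checking the finitely many attackers of $y$.

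The plan for each case is as follows.

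For (iii), the cheapest witness comes from \Cref{obs:all-all}. Take $X=\{x,y\}$ and $R=\emptyset$. Then $y$ is initial, so $x\protects y$, and trivially $\neg xRy$.

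For (iv), take $X=\{x,y\}$ with $R=\{(y,y)\}$. Then $\neg xRy$. Moreover $y$ attacks $y$ but $x$ does not attack $y$, so $x\nprotects y$.

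For (ii), take $X=\{x,y\}$ with $R=\{(x,y)\}$. Then $xRy$. The attacker $x$ of $y$ is not attacked by $x$, since $\neg xRx$, so $x\nprotects y$.

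For (i), I need $xRy$ and that $x$ attacks every attacker of $y$. Since $x$ is itself an attacker of $y$, this forces $xRx$. So I take $X=\{x,y\}$ and $R=\{(x,x),(x,y)\}$. The only attacker of $y$ is $x$, and $xRx$ holds, so $x\protects y$.

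The only point needing care is case (i). The reflexive loop there is unavoidable: $xRy$ together with $x\protects y$ entails $xRx$. I would remark on this, since it anticipates the role of irreflexivity in the consistency results of the following subsection, where no agent should both protect and attack another. Beyond that, the proof is a routine check of four finite examples.
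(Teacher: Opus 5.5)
Your proof is correct and uses the same approach as the paper: exhibit an explicit small relation for each of the four cases. Your (ii) coincides with the paper's; the others differ only in the witnesses chosen. The paper's (i) is the single loop $X=\{1\}$, $R=\{(1,1)\}$, and its (iii) is the chain $R=\{(1,2),(2,3)\}$, which gives an actual rather than a merely virtual protection. Your remark that (i) forces $xRx$ is exactly the contrapositive of \Cref{lem:irefl}.
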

\begin{proof}
    Below some concrete instances of $X$ and $R$ that support
    the above claims: 

\noindent(i) Let $X=\{1\}$ and $R=\{(1,1)\}$. Hence,  $1R1$ and $1\protects 1$.

\noindent(ii) Let $X=\{1, 2\}$ and $R=\{(1,2)\}$. Hence, $1R2$ but $1\nprotects 2$ (since $\neg 1{R}1$)

\noindent(iii) Let $X=\{1, 2, 3\}$ and $R=\{(1,2), (2,3)\}$. Hence,  $\neg 1{R}3$ but $1\protects 3$.

\noindent(iv) Let $X=\{1, 2, 3\}$ and $R=\{(2,3)\}$. Hence,   $\neg 1{R}2$ and $1\nprotects 3$.
\end{proof}

In general, $x$ does not necessarily protect itself (item (i) below). Additionally, 
the following result shows that 
reflexive protective relations correspond exactly to
symmetric attack relations. 

\begin{observation}[Reflexivity of $\protects$]\label{obs:refl-prot}
Let $R$ be a relation on a set $X$. Then: 
(i) In general, $\protects$ is not reflexive; and 
(ii) $\protects$ is reflexive $\iff$~~~ $R$ is symmetric.
\end{observation}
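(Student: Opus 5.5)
The plan is to unfold \Cref{def:protection} directly. Instantiating it at $x=z$ gives
\[
x\protects x \iff \forall y\in X.(yRx \imp xRy).
\]
So reflexivity of $\protects$ says exactly that every attacker of $x$ is attacked back by $x$, for every $x$. Both items then reduce to comparing this condition with symmetry of $R$.

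For item (i), a single small counterexample suffices. I would reuse the instance from \Cref{obs:RandP}(ii): $X=\{1,2\}$ and $R=\{(1,2)\}$. Here $1R2$ holds but $\neg 2R1$. Taking $x=2$ and $y=1$ in the unfolded condition shows $2\nprotects 2$, so $\protects$ is not reflexive.

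For item (ii), I would prove the two directions separately.

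\textbf{($\Leftarrow$)} Assume $R$ is symmetric and fix $x$. For any $y$ with $yRx$, symmetry gives $xRy$. Hence $x\protects x$ by the unfolded condition.

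\textbf{($\Rightarrow$)} Assume $\protects$ is reflexive and let $yRx$. Since $x\protects x$, the unfolded condition applied to this $y$ gives $xRy$. As $x,y$ were arbitrary, this is precisely symmetry of $R$ (with the roles of the two variables renamed).

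There is no real obstacle in this argument; it is a one-line unfolding in each direction. The only point needing care is the order of variables in \Cref{def:protection}: the universally quantified $y$ must be an attacker of $z$ (here $yRx$), and protection requires $xRy$. Getting this order right is what makes the condition match symmetry exactly, rather than some other property.
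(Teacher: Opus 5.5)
Your proposal is correct and follows essentially the same route as the paper: for (ii) you unfold the definition of protection at $x=z$ and match the result against symmetry of $R$, as the paper does. For (i) you use a two-element non-symmetric counterexample; the paper's $R=\{(2,1)\}$ is just the mirror image of your $R=\{(1,2)\}$.
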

\begin{proof}
    For the claim (i), 
    let $X=\{1,2\}$ and $R=\{(2,1)\}$. Clearly, $1\nprotects1$.
    For the claim (2), we have: 

    \begin{tabular}{lcl}
        $\protects$ is reflexive &$\iff$ &$\forall y \in X. y\protects y$ (definition of reflexivity)\\
                                 & $\iff$&$\forall y, x\in X . (xRy\imp yRx)$ (definition of $\protects$)\\
                                 & $\iff$&$\forall x, y \in X . (xRy\imp yRx)$ ($\forall$ commutation)\\
                                 & $\iff$ & R is symmetric (definition of symmetry)
    \end{tabular}

\end{proof}

Our definition of protection agrees with the intuition that when $x$ protects
$y$, in general, $y$ does not protect $x$. Additionally, we can show that the
protection relation is not transitive in  general: if we were to assume that
$x\protects z$ follows from  $x\protects y$ and $y\protects z$, we have to
assume that all the enemies of $z$ are also enemies of $y$, which is not always
the case as shown in the following observation. 

\begin{figure}
\centering
\resizebox{.46\textwidth}{!}{
\begin{tabular}{cccc}
\begin{tikzpicture}[>=stealth]
\node (x1) at (0,0) {$1$};
\node (x2) at (1,1) {$2$};
\node (x3) at (2,0) {$3$};
\draw[<->] (x1) -- (x2);
\draw[->] (x2) -- (x3);
\end{tikzpicture}
& \quad  
\begin{tikzpicture}[>=stealth]
\node (x1) at (0,0) {$1$};
\node (x2) at (1,1) {$2$};
\node (x3) at (1,0) {$3$};
\node (x4) at (1,-1) {$4$};
\node (x5) at (0,-1) {$5$};

\draw[->] (x1) -- (x2);
\draw[->] (x2) -- (x3);
\draw[->] (x3) -- (x1);
\draw[->] (x1) -- (x5);
\draw[->] (x4) -- (x5);
\draw[->] (x3) -- (x4);
\end{tikzpicture}
& \quad 
\begin{tikzpicture}[>=stealth]
\node (x1) at (0,1) {$1$};
\node (x2) at (1,1) {$2$};
\node (x3) at (2,1) {$3$};
\node (x4) at (0,0) {$4$};
\node (x5) at (1,0) {$5$};
\node (x6) at (2,0) {$6$};

\draw[->] (x1) -- (x2);
\draw[->] (x2) -- (x3);
\draw[->] (x4) -- (x5);
\draw[->] (x5) -- (x6);
\draw[->] (x4) -- (x2);
\end{tikzpicture}
\\
    $R_1$  & $R_2$ & $R_3$
\end{tabular}
}
\caption{Relations in~\Cref{pr:tr:symm} and~\Cref{obs:col:pro}.\label{fig:rels:obs}}
\end{figure}
\begin{observation}\label{pr:tr:symm}
    In general, $\protects$ is not symmetric, nor transitive. 
\end{observation}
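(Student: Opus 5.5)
The statement is purely negative, so I will prove it by exhibiting concrete finite relations, for instance those of \Cref{fig:rels:obs}. For each candidate relation the plan is the same: list the attackers of every element, list the targets of every element, and read off $\protects$ from \Cref{def:protection}. By that definition, $x\protects z$ holds exactly when the set of targets of $x$ contains the set of attackers of $z$. The work is then a finite table check.

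\textbf{Non-symmetry.} I will use $R_1=\{(1,2),(2,1),(2,3)\}$ on $X=\{1,2,3\}$.
\begin{itemize}
\item The attackers are: $\{2\}$ for $1$, $\{1\}$ for $2$, and $\{2\}$ for $3$.
\item The targets are: $\{2\}$ for $1$, $\{1,3\}$ for $2$, and $\emptyset$ for $3$.
\end{itemize}
The only attacker of $3$ is $2$, and $1R_12$, so $1\protects 3$. On the other hand, $3\nprotects 1$: the attacker $2$ of $1$ is not attacked by $3$, since $3$ attacks nothing. Hence $\protects$ is not symmetric. Computing the whole table gives $\protects=\{(1,1),(1,3),(2,2)\}$. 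This relation happens to be transitive, so $R_1$ witnesses only the failure of symmetry.

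\textbf{Non-transitivity.} For this I will use $R_2$ on $\{1,\dots,5\}$, with edges $1\to2$, $2\to3$, $3\to1$, $1\to5$, $4\to5$ and $3\to4$.
\begin{itemize}
\item The attackers are: $\{3\}$ for $1$, $\{1\}$ for $2$, $\{2\}$ for $3$, $\{3\}$ for $4$, and $\{1,4\}$ for $5$.
\item The targets are: $\{2,5\}$ for $1$, $\{3\}$ for $2$, $\{1,4\}$ for $3$, $\{5\}$ for $4$, and $\emptyset$ for $5$.
\end{itemize}
Then $1\protects 3$, since $1$ attacks $2$, the only attacker of $3$. Also $3\protects 5$, since $3$ attacks both $1$ and $4$, the attackers of $5$. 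However, $1\nprotects 5$: the attacker $4$ of $5$ is not attacked by $1$. So transitivity fails.

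This construction is an instance of the intuition given before the statement. The chain $1\protects 3\protects 5$ breaks because the enemy $4$ of $5$ is not an enemy of $3$; it is only a target of $3$.

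There is no real obstacle here beyond choosing the examples. The only point needing care is to avoid initial elements in the relevant positions. By \Cref{obs:all-all}, an initial element is protected by everyone, which would make the counterexample fail. Both examples above have no initial elements at all, so this issue does not arise.
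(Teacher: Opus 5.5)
Your proof is correct and matches the paper's: it uses the same relations $R_1$ and $R_2$ of the figure, with the same witnesses. For non-symmetry these are $1\protects 3$ and $3\nprotects 1$; for non-transitivity they are $1\protects 3$, $3\protects 5$ and $1\nprotects 5$. Your attacker/target tables, including the full computation $\protects=\{(1,1),(1,3),(2,2)\}$ for $R_1$, just make the finite check more explicit than the paper does.
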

\begin{proof}
    For symmetry, let 
$X=\{1,2, 3\}$ and $R_1$ be as in~\Cref{fig:rels:obs}. Hence, 
  $1\protectsR{R_1} 3$ but $3\nprotectsR{R_1} 1$ (as $2$ attacks $1$, but $3$ does not attack $2$).
  For transitivity, let $X=\{1,2,3,4,5\}$ and $R_2$ be as in~\Cref{fig:rels:obs}. 
  Note that $1\protectsR{R_2} 3$ (which is only attacked by $2$) 
  and $3\protectsR{R_2} 5$ (which is attacked by $1$ and $4$), 
  but $1\nprotectsR{R_2} 5$. Actually none of $5$'s attackers ($1$ and $4$)
  are attacked by $1$ and, even worse,  $1$ does attack $5$. 
\end{proof}

As expected, there might be elements of $X$ that are not protected by anyone
(i.e. protection, in general, is not surjective). 
Moreover,  the surjectivity of $R$ does not
imply the surjectivity of $\protects$. 

\begin{observation}
Let $R$ be a relation on a set $X$. Then: 
(i) In general, $\protects$ is not surjective;

 \noindent (ii) $R$ surjective ~~$\not\hspace{-0,2mm}\imp~~\protects$ surjective.
\end{observation}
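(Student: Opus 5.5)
Both claims are negative, so I will prove each by exhibiting a small concrete pair $(X,R)$, as in the earlier observations. The key reformulation is that $\protects$ is surjective iff every $z\in X$ has some $x$ with $x\protects z$. By \Cref{def:protection}, that means some $x$ attacks every attacker of $z$. By Observation~\ref{obs:all-all}, initial elements are protected by everyone. So a counterexample needs a non-initial $z$ whose set of attackers $\{y \mid yRz\}$ is not contained in $\{y \mid xRy\}$ for any single $x$.

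For (ii), which also settles (i), I will take $X=\{1,2\}$ and $R=\{(1,2),(2,1)\}$. This $R$ is surjective, since $2R1$ and $1R2$. I then check whether anyone protects $1$, whose only attacker is $2$:
\begin{itemize}
\item $1\protects 1$ holds, since it only requires $1R2$.
\item $2\protects 1$ fails, since it would require $2R2$.
\end{itemize}
So $1$ is protected after all, and this relation is not a counterexample. The lesson is to pick $R$ so that no element attacks a given target's attacker. This is consistent with Observation~\ref{obs:refl-prot}: every symmetric $R$ makes $\protects$ reflexive, hence surjective, so the counterexample must be non-symmetric.

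The refined choice is $X=\{1,2\}$ and $R=\{(1,2),(2,2)\}$. Here $1$ is initial and therefore protected by everyone, so surjectivity of $R$ fails. Instead I will use the $3$-cycle $X=\{1,2,3\}$, $R=\{(1,2),(2,3),(3,1)\}$, which is surjective because each element has exactly one attacker. To protect $z$, some $x$ must attack the unique attacker of $z$:
\begin{itemize}
\item $z=2$: its attacker is $1$, and $3R1$, so $3\protects 2$.
\item $z=3$: its attacker is $2$, and $1R2$, so $1\protects 3$.
\item $z=1$: its attacker is $3$, and $2R3$, so $2\protects 1$.
\end{itemize}
This relation therefore also makes $\protects$ surjective.

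The real obstacle is to choose a target $z$ whose attacker is itself initial, while keeping $R$ surjective. These two requirements conflict: an initial element has no attackers at all, which breaks surjectivity. So instead I will make the attackers of $z$ so numerous that no single $x$ attacks all of them. Take $X=\{1,2,3\}$ and $R=\{(1,2),(2,1),(1,3),(2,3)\}$, which is surjective.
\begin{itemize}
\item The attackers of $3$ are $1$ and $2$.
\item $1\protects 3$ fails, since it would require $1R1$.
\item $2\protects 3$ fails, since it would require $2R2$.
\item $3\protects 3$ fails, since $3$ attacks nothing.
\end{itemize}
Hence $3$ is protected by no one, so $\protects$ is not surjective even though $R$ is, which proves both (i) and (ii).
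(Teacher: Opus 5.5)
Your final counterexample $X=\{1,2,3\}$, $R=\{(1,2),(2,1),(1,3),(2,3)\}$ is correct and takes essentially the paper's route. The paper uses $R=\{(1,3),(2,3),(1,1),(2,2)\}$, where $3$ likewise has two attackers that no single element attacks both; you secure surjectivity of $R$ with the mutual attack $1R2$, $2R1$ instead of self-loops, which incidentally makes your $R$ irreflexive and so a slightly stronger witness, and in a written proof you should drop the abandoned attempts and keep only this example with its check.
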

\begin{proof}
    Let $X=\{1,2, 3\}$ and $R=\{(1,3)(2,3), (1,1), (2,2)\}$. Note that $R$ is
    surjective. Since no one  protects $3$, $\protects$ is not
    surjective. 
\end{proof}

As hinted above, it is natural to assume that the attack
relation is collusive, i.e.,   agents act together against a common enemy. The
following results show that this property  of $R$ is inherited by the
induced protection relation: if two agents $x$ and $y$ protect agent $z$, then
$x$ will also protect all $y$'s protectees.

\begin{observation}\label{obs:col:pro}
 In general, $\protects$ is not collusive. 

\end{observation}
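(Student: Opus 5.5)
The plan is to refute collusiveness of $\protects$ with a single finite counterexample, namely the relation $R_3$ on $X=\{1,\dots,6\}$ depicted in~\Cref{fig:rels:obs}, i.e.\ $R_3=\{(1,2),(2,3),(4,5),(5,6),(4,2)\}$. Collusiveness of $\protects$ means $\mathbf{Q_3}$ for $\protects$: whenever $x\protects y$, $x\protects z$ and $w\protects y$, we must have $w\protects z$. So it suffices to exhibit $x,y,z,w$ with the first three protections holding and $w\nprotects z$.

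First I will tabulate the attackers of each element: $1$ and $4$ have none (they are initial), $2$ is attacked by $1$ and $4$, $3$ by $2$, $5$ by $4$, and $6$ by $5$. By~\Cref{def:protection}, $x\protectsR{R_3} z$ holds iff $x$ attacks every attacker of $z$. This gives the following:
\begin{itemize}
\item by~\Cref{obs:all-all}, everyone protects $1$ and $4$;
\item the protectors of $3$ are exactly those attacking $2$, namely $1$ and $4$;
\item the only protector of $6$ is $4$, since only $4$ attacks $5$;
\item nobody protects $2$ or $5$, since nobody attacks $1$ or $4$.
\end{itemize}

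Next I will take $x=4$, $y=3$, $z=6$ and $w=1$. Then $4\protectsR{R_3}3$ and $4\protectsR{R_3}6$, because $4$ attacks both $2$ and $5$. Also $1\protectsR{R_3}3$, because $1$ attacks $2$. However, $1\nprotectsR{R_3}6$, because $5R_36$ but $\neg 1R_35$. This violates $\mathbf{Q_3}$ for $\protectsR{R_3}$.

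I deliberately avoid using the initial elements as $y$. A cheaper witness exists, for instance $x=1$, $y=1$, $z=3$, $w=2$, which relies only on the vacuous protection of~\Cref{obs:all-all}. But that witness would make the failure look like an artefact of virtual protection. The chosen instance uses only actual protections ($\aprotects$), so the result stays meaningful even for the refined notion.

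There is no real obstacle beyond carefully computing the protection relation for $R_3$. It may be worth adding a remark that $R_3$ itself is not collusive: $4R_32$, $4R_35$ and $1R_32$ hold, but $\neg 1R_35$. This shows the counterexample does not contradict any inheritance result the paper goes on to prove under the hypothesis that $R$ is collusive.
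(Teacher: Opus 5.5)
Your proposal is correct and is exactly the paper's proof: the paper uses the same relation $R_3$ and the same witnesses ($1$ and $4$ both protect $3$, $4$ protects $6$, but $1$ does not protect $6$). Your extra remarks are accurate but not needed for the observation. These are that the witnesses involve only actual protections, and that $R_3$ is itself not collusive, so there is no conflict with \Cref{th:col-prot}.
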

\begin{proof}
    Let $X=\{1,2, 3, 4, 5, 6\}$ and $R_3$ be as in~\Cref{fig:rels:obs}. 
    In this case,  $1\protectsR{R_3} 3$ and $4\protectsR{R_3} 3$. 
    On the other hand $4\protectsR{R_3} 6$, but $1\nprotectsR{R_3} 6$. 
    So $\protectsR{R_3}$ is not collusive.
\end{proof}

\begin{theorem}\label{th:col-prot}
If $R$ is collusive and surjective, then $\protects$ is collusive. 
\end{theorem}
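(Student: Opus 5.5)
We unfold the definitions and show directly that the collusive pattern $Q_3$ transfers from $R$ to $\protects$. Fix $x,y,z,w\in X$ with $x\protects y$, $x\protects z$ and $w\protects y$; we must show $w\protects z$. That is, for an arbitrary $v$ with $vRz$, we must show $wRv$.

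The first step uses surjectivity to obtain a witness: pick $u\in X$ with $uRy$. This is exactly where the hypothesis is needed. If $y$ were initial, then by \Cref{obs:all-all} the premises $x\protects y$ and $w\protects y$ would hold vacuously and carry no information linking $x$ and $w$; this is what happens in the counterexample of \Cref{obs:col:pro}. From $uRy$, the assumptions $x\protects y$ and $w\protects y$ give $xRu$ and $wRu$. So $x$ and $w$ share the common target $u$ in $R$.

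The second step handles the arbitrary attacker $v$. From $vRz$ and $x\protects z$ we obtain $xRv$. We now have the $R$-configuration $xRu$, $xRv$ and $wRu$. Collusiveness of $R$, instantiated as $Q_3$ with $(x,y,z,w)\mapsto(x,u,v,w)$, yields $wRv$. Since $v$ was an arbitrary attacker of $z$, this gives $w\protects z$.

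There is no serious obstacle; the only delicate point is getting the variable roles in $Q_3$ right. The protection hypotheses point from protectors to attackers, so the common target for $x$ and $w$ in $R$ is an attacker $u$ of $y$, not $y$ itself. The target to be transferred is an attacker $v$ of $z$. Surjectivity is used only to guarantee that $u$ exists.
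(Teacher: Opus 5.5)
Your proof is correct and is the paper's argument up to renaming of variables: surjectivity gives an attacker $u$ of the common protectee, both protectors must attack $u$, the first protector attacks the arbitrary attacker $v$ of $z$, and a single instance of $Q_3$ yields $wRv$.

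One small correction to your motivating aside: in the counterexample of \Cref{obs:col:pro} the shared protectee $3$ is not initial, and the failure comes from $R_3$ itself not being collusive, since $(4,2),(4,5),(1,2)\in R_3$ but $(1,5)\notin R_3$. The vacuous-protection phenomenon you describe is instead illustrated by the collusive, non-surjective relation $R=\{(1,2),(2,3)\}$ on $\{1,\dots,5\}$, which the paper gives at the start of its proof.
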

\begin{proof}
We first note that,  
in general, the collusiveness of $R$ does not imply the collusiveness of
$\protects$. To see that, let $X=\{1, 2, 3, 4, 5\}$ and 
$R=\{(1,2), (2,3)\}$. Clearly, $R$ is collusive  and not
surjective. In particular,  $5$ is initial and from~\Cref{obs:all-all},
 $1\protects 5$ and $4\protects 5$. Since 
  $1\protects 3$ and $4\nprotects 3$, $\protects$ is not collusive.

\noindent Let $R$ be collusive and surjective and let us assume that $x,x', z, z'\!\in\!X$ such that:
({\sf hyp.~1}) $x\protects z$ (i.e.,  $\forall y\!\in\!X.(yRz\;\imp\;xRy)$);
({\sf hyp.~2}) $x'\protects z$ (i.e.,  $\forall y'\!\in\!X.(y'Rz\;\imp\;x'Ry')$); and 
({\sf hyp.~3}) $x\protects z'$ (i.e.,  $\forall w\!\in\!X.(wRz'\;\imp\;xRw)$). 
We want to show that $x'\protects z'$ (i.e.,  $\forall w'\!\in\!X\,(w'Rz'\;\imp\;x'Rw')$).
For that, let us consider a $w'\!\in\!X$ such that $w'Rz'$~~({\sf hyp.~4}) and let us show that $x'Rw'$.
From {\sf hyp.~3}, one gets $w'Rz'\;\imp\;xRw'$ and hence,  by {\sf hyp.~4},  $xRw'$.
As $R$ is surjective, there exists an $y\!\in\!X$ s.t. $yRz$. From {\sf hyp.~1} we know that  $yRz\;\imp\;xRy$  and then,  $xRy$.
Using {\sf hyp.~2} one gets $yRz\;\imp\;x'Ry$ and from the fact that $yRz$, one has $x'Ry$.
Since $R$ is collusive, from $xRy$ , $x'Ry$  and $xRw'$, follows that $x'Rw'$, as needed. 
\end{proof}

\begin{corollary}
If $R$ is collusive  then $\aprotects$ is collusive. 
\end{corollary}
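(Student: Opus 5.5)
The plan is to redo the argument of \Cref{th:col-prot} directly for $\aprotects$. The only place surjectivity of $R$ was used there was to produce some $y$ with $yRz$ for the common protectee $z$. The definition of actual protection supplies exactly such a witness.

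Assume $x\aprotects z$, $x'\aprotects z$ and $x\aprotects z'$; we must show $x'\aprotects z'$. This has two conjuncts. The first conjunct, $\exists w.\, wRz'$, is immediate from the first conjunct of $x\aprotects z'$. For the second conjunct, fix $w'$ with $w'Rz'$.
\begin{enumerate}
\item From $x\aprotects z'$ we get $xRw'$.
\item From $x\aprotects z$ we obtain some $y$ with $yRz$. The same hypothesis gives $xRy$, and $x'\aprotects z$ gives $x'Ry$.
\item Collusiveness of $R$ ($Q_3$), instantiated with $xRy$, $xRw'$ and $x'Ry$, yields $x'Rw'$, as required.
\end{enumerate}

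Alternatively, one could try to invoke \Cref{th:col-prot} by restricting $R$ to its non-initial part. However, surjectivity of the restriction may fail, so the direct argument is cleaner. There is no real obstacle here: the only point needing care is noticing that the witness $y$ comes from the existential conjunct of $x\aprotects z$ (or equally of $x'\aprotects z$), which replaces the use of surjectivity.
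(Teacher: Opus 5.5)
Your proof is correct and follows the paper's route. The paper states this as an immediate corollary of \Cref{th:col-prot}, whose argument uses surjectivity of $R$ only to obtain an attacker $y$ of the common protectee $z$; you take that witness from the existential conjunct of actual protection, and you also correctly discharge the existential conjunct of the conclusion using the fact that $x$ actually protects $z'$. The remainder, including the instance of $\mathbf{Q_3}$ with $xRy$, $xRw'$ and $x'Ry$, is exactly the paper's proof of the theorem.
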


The following theorem justifies the alternative name given to co-confluent
relations namely,  \emph{protective} relations. Consider
configurations $\mathbf{Q_3}$ ($R$ is collusive) and $\mathbf{Q_2}$ ($R$ is co-confluent or
protective) in~\Cref{fig:rel}. In both cases, $x$ and $w$ target $y$. In
$\mathbf{Q_3}$, if $x$ attacks $z$,
then  $w$ must also attack $z$. In $\mathbf{Q_2}$, the arrow
between $z$ and $x$ is oriented in the opposite direction relative to
$\mathbf{Q_3}$. Furthermore, the resulting two-step path from $z$ to $y$ can be
interpreted as $z$ ``protecting'' $y$ against its enemies. We can thus interpret $\mathbf{Q_2}$ as: ``if $z$ protects
$y$ against attacks from $x$, then $z$ must also protect $y$ against attacks
from $w$.''

\begin{theorem}
If $R$ is a co-confluent (a.k.a. protective) collusion, then $\protects$ is a collusion. 
\end{theorem}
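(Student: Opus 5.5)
We need to show that $\protects$ is collusive, total and surjective. Collusiveness comes for free: a collusion is in particular collusive and surjective, so \Cref{th:col-prot} applies directly and yields that $\protects$ is collusive. It remains to establish totality and surjectivity of $\protects$, and this is where co-confluence must enter (totality and surjectivity of $R$ alone will not suffice in general).

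\textbf{Surjectivity of $\protects$.} Given $z\in X$, we need some $x$ with $x\protects z$, i.e.\ an $x$ attacking every attacker of $z$. By surjectivity of $R$, pick $y_0$ with $y_0Rz$, and again by surjectivity pick $x$ with $xRy_0$. We claim $x\protects z$. Let $w$ be any attacker of $z$, so $wRz$. Now instantiate $\mathbf{Q_2}$, namely $(aRb\wedge cRa\wedge dRb)\imp cRd$, with $a:=y_0$, $b:=z$, $c:=x$, $d:=w$. The hypotheses are $y_0Rz$, $xRy_0$ and $wRz$, all of which hold, so $xRw$. Hence every attacker of $z$ is attacked by $x$, i.e.\ $x\protects z$.

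\textbf{Totality of $\protects$.} Given $x\in X$, we need some $z$ with $x\protects z$. By totality of $R$, pick $y$ with $xRy$, and again by totality pick $z$ with $yRz$. We claim $x\protects z$. Let $w$ be any attacker of $z$, so $wRz$. Apply $\mathbf{Q_2}$ with $a:=y$, $b:=z$, $c:=x$, $d:=w$. The hypotheses are $yRz$, $xRy$ and $wRz$, so we get $xRw$, as required.

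The main obstacle is finding these instantiations of co-confluence. The key observation is that a two-step path $x\,R\,y\,R\,z$ already makes $x$ a protector of $z$ when $R$ is protective, which matches the informal reading of $\mathbf{Q_2}$ given just before the statement. Once this is seen, totality and surjectivity of $R$ are used only to produce such two-step paths ending at a given $z$ or starting at a given $x$. A final check is that the variable roles match the displayed form of $\mathbf{Q_2}$, in particular that the conclusion is $zRw$ with $z$ the origin of the incoming arrow into $x$. With that confirmed, the three properties together show that $\protects$ is a collusion.
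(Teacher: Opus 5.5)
Your proof is correct and follows the paper's argument. You get collusiveness from Theorem~\ref{th:col-prot}, then use surjectivity (resp.\ totality) of $R$ to build a two-step path $x\,R\,y\,R\,z$, and co-confluence to conclude that $x$ protects $z$. The only difference is that you apply $\mathbf{Q_2}$ uniformly to every attacker $w$ of $z$, including $y$ itself, which makes the paper's case split on whether $y$ is the unique attacker of $z$ unnecessary.
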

\begin{proof}
    Let $R$ be 
    a co-confluent collusion. Then $\protects$ is collusive
    by~\Cref{th:col-prot}. 
    Now let us show that $\protects$ is surjective. For that, let 
 $z\!\in\!X$. Since $R$ is surjective, there exists $y\!\in\!X$ such that $yRz$ and there exists  $x\!\in\!X$, such that $xRy$. We distinguish two cases:
if $y$ is the unique element of $X$ such that $yRz$, we are done, since $x\protects z$; otherwise, 
 let any other $y'\!\in\!X$ s.t. $y'Rz$. Since $R$ is protective, $xRy'$ and we conclude $x\protects z$.
 Finally,  we show that  $\protects$ 
is total. Let $x\!\in\!X$. Since $R$ is total, there exists $y\!\in\!X$ such that $xRy$ and there exists $z\!\in\!X$ s.t. that $yRz$. 
If $y$ is the only element of $X$ such that $yRz$, we are done (since $x\protects z$). Otherwise, 
let $y'$ any other element of $X$ such that $y'Rz$. Since $R$ is protective,  $xRy'$ and therefore, $x\protects z$.
\end{proof}

\subsection{Consistent Protection}\label{sec:cons-prop}

As shown in~\Cref{obs:RandP}, it is possible that $x$ protects $y$ ($x\protects
y$) and, at the same time, $x$ attacks $y$ ($xRy$). This does not seem very
natural and in this section, we establish sufficient and necessary
conditions for the induced protection relation to be ``\emph{consistent},'' in
the sense that a protector of an element is never simultaneously an attacker of
that element. More precisely,

\begin{definition}[Consistent and Complete Protection]
 Let $R\subseteq X\times X$. We say that $\protects$ is consistent iff $\forall x,y\in X.(x\protects y\imp\neg{xRy})$.
Moreover, $\protects$ is consistent and complete iff~$\forall x,y\in X.(x\protects y\iff\neg{xRy})$.
\end{definition}

The following two observations show that the problem of consistency of the induced
protection relation is due to self-attacks. 

\begin{observation}\label{lem:irefl}
Let $x\,\in\,X$. Then, $\neg xRx\imp\forall y\in X.(x\protects y\imp\neg xRy)$.
\end{observation}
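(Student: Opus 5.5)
The argument is a direct unfolding of \Cref{def:protection}, proved by contraposition on the inner implication. Fix $x\in X$ with $\neg xRx$, and let $y\in X$ be arbitrary with $x\protects y$. We have to show $\neg xRy$. We assume instead that $xRy$ and aim for a contradiction.

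Unfolding $x\protects y$ gives $\forall u\in X.(uRy\imp xRu)$. The key step is to instantiate this universal statement with $u:=x$. The assumption $xRy$ says exactly that $x$ is one of $y$'s attackers. So the instance $xRy\imp xRx$ yields $xRx$, which contradicts $\neg xRx$. Hence $\neg xRy$. Since $y$ was arbitrary, this gives $\forall y\in X.(x\protects y\imp\neg xRy)$, as claimed.

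There is no real obstacle here. The only point that needs care is choosing the right instantiation: the attacker $x$ itself must be fed back into the protection condition, so that protecting $y$ forces $x$ to attack each of $y$'s attackers, $x$ included. This is also why self-attacks are exactly what breaks consistency, in line with the remark preceding the observation. The argument does not use surjectivity or any quadrangular property of $R$.
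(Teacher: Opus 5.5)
Your proof is correct and follows the same route as the paper's. Both instantiate the protection condition $\forall z\in X.(zRy\imp xRz)$ at $z:=x$ to obtain $xRy\imp xRx$, then use $\neg xRx$ to conclude $\neg xRy$; the paper phrases this last step as a contraposition and you phrase it as a contradiction, which is the same step.
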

\begin{proof}
Let $x,y\in X$ such that $x\protects y$. 
This means $\forall z\in X .(zRy\imp xRz)$. So, in particular,
$xRy\imp xRx$. By contraposition, $\neg xRx\imp \neg
xRy$. Hence if $\neg xRx$, we know that $\neg xRy$.
\end{proof}

\begin{observation}\label{ob:ir-cons}
If $R$ irreflexive  then  $\protects \mbox{is consistent}$. Moreover, 
the converse does not hold in general. 
\end{observation}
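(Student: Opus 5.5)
The statement has two parts, and I would treat them separately: the first follows directly from \Cref{lem:irefl}, and the second needs a small counterexample.

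\emph{First claim.} Assume $R$ is irreflexive, so $\neg xRx$ holds for every $x\in X$. \Cref{lem:irefl} then gives, for each such $x$, that $\forall y\in X.(x\protects y\imp\neg xRy)$. Quantifying over $x$ yields $\forall x,y\in X.(x\protects y\imp \neg xRy)$, which is exactly the definition of consistency of $\protects$. This direction is routine.

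\emph{Failure of the converse.} I need a relation $R$ that has at least one self-attack $xRx$, yet whose protection relation is still consistent. The proof of \Cref{lem:irefl} shows where inconsistency comes from: if $x\protects y$ and $xRy$, then instantiating the protection at the attacker $x$ of $y$ forces $xRx$. So a self-attacker $x$ can make $\protects$ inconsistent only if it protects some $y$ that it also attacks. The counterexample should therefore contain a self-attacker that fails to protect every one of its own targets.

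The simplest candidate is $X=\{1,2\}$ with $R=\{(1,1)\}$. Here $1$ attacks only itself. To check whether $1\protects 1$, note that the only attacker of $1$ is $1$, and $1R1$ holds, so $1\protects 1$. Since $1R1$ also holds, consistency fails, and this candidate does not work.

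I would instead add a second attacker of the self-attacker that it cannot reach. Take $X=\{1,2\}$ and $R=\{(1,1),(2,1)\}$. Consistency only needs checking at the pairs $(x,y)$ with $xRy$, namely $(1,1)$ and $(2,1)$.
\begin{itemize}
\item For $(1,1)$: the attackers of $1$ are $1$ and $2$. Since $\neg 1R2$, we get $1\nprotects 1$.
\item For $(2,1)$: to have $2\protects 1$ we would need $2R1$ and $2R2$. But $\neg 2R2$, so $2\nprotects 1$.
\end{itemize}
Every attacking pair is therefore a non-protecting pair, so $\protects$ is consistent, while $R$ is not irreflexive because $1R1$. This establishes that the converse fails.

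The only real obstacle is finding this counterexample, and the pitfall shown by the first candidate is the one to avoid. The fix is to guarantee that each self-attacker has some attacker that it does not itself attack, so that it protects none of its own targets.
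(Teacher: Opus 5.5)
Your proposal is correct and matches the paper's proof. The first claim is obtained from \Cref{lem:irefl} exactly as you do. Your counterexample $R=\{(1,1),(2,1)\}$ on $X=\{1,2\}$ is the paper's $R=\{(1,2),(2,2)\}$ with the labels $1$ and $2$ swapped: the only protections are the virtual ones toward the initial element, so $\protects\cap R=\emptyset$.
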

\begin{proof}
    The result follows from~\Cref{lem:irefl}.
    To show that the consistency of $\protects$ does not imply the 
    irreflexivity of $R$, 
let $X=\{1,2\}$ and 
 $R=\{(1,2),(2,2)\}$.
 Hence,  $\protects\,=\big{\{}(1,1), (2,1)\big{\}}$, i.e. there is no ``actual
 protection'', only ``virtual protection'' (see~\Cref{obs:all-all} ). 
  Since $\protects\cap\, R=\emptyset$, one has $\forall x,
y\in X.(x\protects y\imp\neg xRy)$, i.e. $\protects$ is
consistent. But $R$ is not irreflexive since $2R2$.
\end{proof}

Now we establish necessary and sufficient conditions
for the induced protection relation to be consistent. 

\begin{theorem}
Let $R$ be co-confluent. Then, 
$\protects\mbox{ is consistent }~~\Leftrightarrow~~R\mbox{ is irreflexive}$. 
\end{theorem}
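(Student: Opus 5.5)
The direction ``$\Leftarrow$'' needs no co-confluence. If $R$ is irreflexive, then $\protects$ is consistent by \Cref{ob:ir-cons}, which itself follows from \Cref{lem:irefl}. The whole effort therefore goes into ``$\Rightarrow$''. I will prove it by contraposition: assuming $R$ co-confluent and not irreflexive, I will exhibit a pair $(a,b)$ with $a\protects b$ and $aRb$, so that $\protects$ is inconsistent.

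So suppose $xRx$ for some $x\in X$. The natural candidate for the violating pair is $(x,x)$ itself. We already have $xRx$, so it remains to show $x\protects x$, i.e. $\forall w\in X.(wRx\imp xRw)$. This is where co-confluence enters. Recall $\mathbf{Q_2}$: for all $x,y,z,w$, $(xRy\wedge zRx\wedge wRy)\imp zRw$. I instantiate the three variables $x,y,z$ of $\mathbf{Q_2}$ all to our self-attacking element $x$, and keep $w$ arbitrary. The two hypotheses $xRy$ and $zRx$ then both become $xRx$, which holds. The remaining hypothesis $wRy$ becomes $wRx$, and the conclusion $zRw$ becomes $xRw$. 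Hence for every $w$ with $wRx$ we get $xRw$, which is exactly $x\protects x$.

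Combining, $x\protects x$ and $xRx$, contradicting consistency. Therefore consistency of $\protects$ forces $\neg xRx$ for all $x$, i.e. $R$ is irreflexive.

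The only potentially delicate step is checking that \Cref{def:rel} permits the degenerate instantiation in which several quantified variables take the same value. It does: the requirement that $x,y,z,w$ be distinct concerns the syntactic variables in the formula, not their values. I also expect to remark that co-confluence is genuinely needed for ``$\Rightarrow$'', in view of the counterexample in \Cref{ob:ir-cons}. That $R=\{(1,2),(2,2)\}$ is not co-confluent: take $x=y=z=2$ and $w=1$; then $2R2$, $2R2$ and $1R2$ hold but $2R1$ fails.
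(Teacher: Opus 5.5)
Your proof is correct and is essentially the paper's argument. The paper also instantiates $\mathbf{Q_2}$ with three of its variables set to the self-attacking element $x$. It only phrases this as a contradiction: consistency gives $\neg\, x\protects x$, hence a witness $y$ with $yRx$ and $\neg xRy$, while co-confluence forces $xRy$. You instead derive $x\protects x$ directly. Your side check that $\{(1,2),(2,2)\}$ fails co-confluence is also correct.
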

\begin{proof}
    The ($\Leftarrow$) direction is immediate from~\Cref{ob:ir-cons}. For the 
($\Rightarrow$) direction, assume that $R$ is co-confluent and $\protects$ consistent. 
To obtain a contradiction,  assume that $R$ is not irreflexive. 
and  there exists $x\!\in\!X$ such that $xRx$. 
Since $\protects$ is consistent, we must then have $\neg \,x\!\protects\! x$.
Hence, by $\protects$'s definition, there exists a $y\!\in\!X$ such that
$yRx$ and $\neg xRy$.
Since $R$ is co-confluent,
we have $\big{(}(yRz\, \wedge\, y'Rz\,\wedge\,xRy) \imp xRy'\big{)}[x/y\,, x/z\,, y/y'\,, y/y]$, i.e., 
$\big{(}(xRx\, \wedge\, yRx\,\wedge\,xRx) \imp xRy\big{)}$. 
Given that $xRx$ and $yRx$, we must have $xRy$, which is a contradiction. 
\end{proof}

The following observation investigates whether some other properties
of $R$ are inherited or not in the induced protection 
relation. 

\begin{observation}
The following holds: 
(i) $R$ anti-transitive $\imp$ $R$  irreflexive; 
(ii) $R$ anti-transitive \;$\imp\;\, \protects \mbox{consistent}$; 
(iii) $R$ irreflexive $\not{\hspace{-1,3mm}\imp}$ $R$ anti-transitive;  and 
(iv) $\protects \mbox{consistent}~\not{\hspace{-1,3mm}\imp}~R$ anti-transitive.
\end{observation}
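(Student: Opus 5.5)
The plan is to treat the four items separately, with the two positive claims (i) and (ii) derived directly from the definitions and the two negative claims (iii) and (iv) witnessed by small counterexamples, in the style of the earlier observations.

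For (i), I specialize the anti-transitivity axiom $\forall x,y,z.\,(xRy\wedge yRz)\imp\neg xRz$ to $x=y=z$. If some $x$ had $xRx$, the instance would give $(xRx\wedge xRx)\imp\neg xRx$, and hence $\neg xRx$, a contradiction. So $R$ is irreflexive.

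Item (ii) then follows by composing (i) with \Cref{ob:ir-cons}: anti-transitivity gives irreflexivity, which gives consistency of $\protects$.

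For (iii), I take a relation that is irreflexive but contains a transitive triangle. Let $X=\{1,2,3\}$ and $R=\{(1,2),(2,3),(1,3)\}$. It is irreflexive, yet $1R2$, $2R3$ and $1R3$ together violate anti-transitivity.

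For (iv), I reuse that same relation, since it is irreflexive and so $\protects$ is consistent by \Cref{ob:ir-cons}, while it was just shown not to be anti-transitive. Alternatively, I could give a non-irreflexive witness in the spirit of \Cref{ob:ir-cons}, but the irreflexive example already suffices.

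The only step needing any care is picking the counterexamples, and even that is immediate. So I expect no real obstacle here; the only thing to check is that the witness for (iv) indeed makes $\protects$ consistent, and this is guaranteed by irreflexivity.
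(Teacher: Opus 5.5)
Your proposal is correct and matches the paper's proof of (i)--(iii), including the same witness $\{(1,2),(2,3),(1,3)\}$ for (iii). For (iv), the paper instead uses $\{(1,2),(2,3),(1,3),(4,3)\}$ and checks consistency by computing $\protects$ explicitly: only the initial elements $1$ and $4$ are protected, and nobody attacks them. You reuse the (iii) witness and invoke \Cref{ob:ir-cons}, which is equally valid since the paper's witness is also irreflexive, and it is a bit more economical.
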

\begin{proof}
\noindent\textbf{(i)}
We have $\forall x\in X.\big{(}(xRx\wedge xRx)\imp \neg xRx)\big{)}$ and therefore,  $\forall x\in X. \neg xRx$.
Claim \textbf{(ii)} follows 
directly from (i) and~\Cref{ob:ir-cons}. 
For claim \textbf{(iii)}, note that  
$R=\{(1, 2), (2,3), (1,3)$ is irreflexive but not anti-transitive.
Finally, for \textbf{(iv)}, let 
$R=\{(1, 2), (2,3), (1,3), (4,3)\}$, which  is not anti-transitive but  $\protects$ is consistent. 
Indeed, it happens that the only protected elements are the initial ones, i.e., the non attacked ones:
$
\protects=\emptyset\,\cup\,
\big{\{}(1,1), (2,1), (3,1), (4,1)\big{\}}\,\cup\,\big{\{}(1,4), (2,4), (3,4), (4,4)\big{\}}
$. 
Since those elements are attacked by no one, no situation involving them can contradicts consistency.
\end{proof}

We conclude by noticing that, from the desiderata in~\Cref{sec:attack}, we do
not necessarily impose symmetry in $R$ to obtain a consistent protection
relation. Moreover, from \Cref{obs:refl-prot}, we know that if $R$ is symmetric, then
$\protects$ is necessarily reflexive. 

%% file: balance.tex
Balance theory originates from theories in social psychology~\cite{heider}.
This theory asserts  that connections between friends and enemies tend to be in
a balance state, where  ``unstable'' configurations, as triangles of agents of
two friends and one enemy, are likely to disappear. In the context of
\emph{social network logics} (see e.g.,
\cite{DBLP:journals/logcom/PedersenSA21,DBLP:conf/lpar/FreimanOPF24}), (part
of) the theory of balance has been axiomatized within the framework of
\emph{signed frames} where there is a symmetric \emph{positive} relation
(between friends) and a symmetric \emph{negative} relation (between enemies). A
fundamental result in balance theory~\cite{balance} is how to link the above
mentioned \emph{local} property between three-agents configurations to the
formation of antagonic groups of agents in the network. In this section we show
how the agonal interpretation in~\Cref{sec:agonal} allows us to give an
alternative characterization of the balance theorem, by relating  collusive
relations  with balanced configurations. Moreover, we generalize the conditions
in the balance theorem to the non-symmetric case. 

\subsection{Signed Frames and the Balance Theorem}\label{sec:classic:balance}

A signed frame (or network) is an undirected graph where nodes are agents (in our case,
elements of an arbitrary set $X$) with two binary relations: $\rplus$ linking
friends, and $\rminus$ linking enemies. In a more general setting, $x\rplus y$
not only means that ``$x$ is a friend of $y$,'' but also that ``$x$ agrees with
$y$'' in a particular topic of discussion. Similarly, $x\rminus y$ stands for
``$x$ is an enemy of $y$,'' ``$x$ does not agree with $y$,'' etc. 

\begin{definition}[Signed Frame]\label{def:signed}
A \emph{signed} frame $\mathcal{F}$ is a tuple $\langle X,\rplus,\rminus\rangle$ where $X$
is a set (of agents),  
$\rplus,\rminus\subseteq X\times X$, and:
(1)  $\rplus$ is reflexive and symmetric; and (2) 
 $\rplus$ and $\rminus$ are non-overlapping, i.e., $\forall x,y\in X: \neg(x\rplus y) \vee \neg(x \rminus y)$. 
 We say that $\mathcal{F}$ is a \textbf{symmetric signed frame} (for short \ssf), if additionally (3) 
	$\rminus$ is symmetric. 
Finally, a (symmetric) signed frame 
is said to be \textbf{collectively connected} (for short \cc) if additionally 
	 $\forall x,y \in X.  (x\rplus y\vee x \rminus y)$.
\end{definition}

In ``standard'' balance theory, only the case where $\rminus$ is symmetric is
considered. Hence, the balance theorems in this and the following section focus on symmetric signed
frames (\ssf). 
In \S\ref{sec:b-gen} we generalize this setting to signed frames where
$\rminus$ is not necessarily symmetric.

The network of agents in~\cite{DBLP:journals/logcom/PedersenSA21} is modeled as an
undirected graph and then, the positive and negative relations need to be
symmetric. It is assumed that any agent $x$ \emph{agrees} with itself and
hence, $\rplus$ is reflexive. Moreover, agents $x$ and $y$ cannot be, at the
same time, friends and enemies and the two relations are non-overlapping.
Hence, by symmetry of $\rplus$ and non-overlapping, $\rminus$ is necessarily
irreflexive. In some cases, it is assumed that the network is
collectively connected (\cc), i.e., all the agents in the network are positively or
negatively related. This extra condition is not too restrictive from the point
of view of balance since, as shown in~\Cref{thm:balance}, a balanced \ssf~ can
be always extended to a balanced \cc~\ssf

\paragraph{Triads and stable configurations.}

Consider three agents related with positive or negative relations. There are 
four possible  configuration as depicted
in~\Cref{fig:balance}. Some of these configurations are called ``stable'',
since they induce stability properties. More precisely, 
stable configurations are either triads where all the relations are
positive (and the three agents are friends, as in (a)), or configurations with two
negative relations and one positive relation (case  (b) in the figure).
Configuration (a)  corresponds to the situation where ``the friend of my
friend is my friend,'' and configuration (b) correspond to the situation
where  ``the enemy of my enemy is my friend.''
Configuration (c) and (d) are unstable since they entail certain ``tension''
between the agents. For instance, in (c), there is a tension for two of the
agents to become friends, e.g. $x$ and $y$,  ``against'' the common enemy $z$.
Similarly, in (d), there is a tension for $x$ and $y$ to become friends. 


\begin{definition}[Local Balance]\label{def:local-balance}
The frame $\mathcal{F}=\langle X,\rplus,\rminus\rangle$ 
has the \emph{local balance} property iff for all $x,y,z$:
    (1) $ ( (x\rplus y \wedge y\rplus z) \vee  (x\rminus y \wedge y\rminus z)) \imp x\rplus z$~~~~and~~~~
    (2) $( (x\rplus y \wedge y\rminus z) \vee (x\rminus y \wedge y\rplus z)) \imp x\rminus z$.
\end{definition}

A signed frame thereby has the local balance property iff the graph do not include the triads (c) and (d). 
However, balance can be too restrictive in some situations,
and the notion of weak balance~\cite{davis-weak}
accepts configuration (c).

\begin{definition}[Local Weak Balance]\label{def:local-weak-balance}
A frame  $\mathcal{F}=\langle X,\rplus,\rminus-\rangle$ 
has the \emph{local weak balance} property iff for all $x,y,z \in X$: 
(1)  $ (x\rplus y \wedge y\rplus z) \imp x\rplus z$~~~~and~~~~ (2) 
	  $( (x\rplus y \wedge y\rminus z) \vee (x\rminus y \wedge y\rplus z)) \imp x\rminus z$.
\end{definition}

\begin{figure}
    \begin{center}
\begin{tabular}{cccccc}
\pnlfig{+}{+}{+} & 
\pnlfig{-}{-}{+} & 
\pnlfig{-}{-}{-} & 
\pnlfig{+}{+}{-} \\
(a) & (b)& (c)& (d)
\end{tabular}
    \end{center}
\caption{
Configurations (a) and (b) are balanced. 
Configurations (a), (b) and (c) are weak balanced. 
 \label{fig:balance}}
\end{figure}

A signed frame thereby has the local weak balance property iff the graph do not include the triad  (d). 
Below, to avoid confusions, we sometimes use ``strong balance'' to the denote the 
property in~\Cref{def:local-balance}.

\begin{theorem}[Balance~\cite{balance}]\label{thm:balance}
Let $\mathcal{F}=\langle X,\rplus,\rminus \rangle$ be a \ssf. The three
following properties are equivalent:
\begin{enumerate}
	\item There exists a \cc  \ssf~ $\mathcal{F'}=\langle X, \R_1^+,\R_1^-\rangle$
	such that $\R^+ \subseteq \R_1^+$ and $\R^- \subseteq \R_1^-$
	that has the local balance property (i.e., all the triangles in $\cF'$ satisfy the local balance property). 
	\item There is a bi-partition $S_1$, $S_2$ of $X$ s.t. $\forall x,y\in X$:
         If $x\rplus y$, then $x,y\in S_i$ for some $i\in\{1,2\}$; and 
		 If $x\rminus y$, then $x\in S_i$ and $y \in S_j$ with $i\neq j$. 
	\item All simple cycles in $\mathcal{F}$ have an even number of negative edges.
\end{enumerate}
\end{theorem}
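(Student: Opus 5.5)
We prove the cycle of implications $1 \Rightarrow 2 \Rightarrow 1$ and $2 \Leftrightarrow 3$, working throughout with the symmetric relations $\rplus$ (reflexive) and $\rminus$ (irreflexive, by non-overlapping and reflexivity of $\rplus$).

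\textbf{($1\Rightarrow 2$).} Take the \cc{} \ssf{} $\mathcal{F}'=\langle X,\R_1^+,\R_1^-\rangle$ with local balance. By local balance (1), $\R_1^+$ is transitive: $x\R_1^+ y \wedge y\R_1^+ z$ yields $x\R_1^+ z$. Since $\R_1^+$ is also reflexive and symmetric, it is an equivalence relation. We claim it has at most two classes. If $x,y,z$ are pairwise non-$\R_1^+$-related, then collective connectedness gives $x\R_1^-y$ and $y\R_1^-z$. Local balance (1) then forces $x\R_1^+z$, a contradiction.

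Let $S_1$ be the class of some element and $S_2=X\setminus S_1$. If $X$ is a single class, take $S_2=\emptyset$; we read ``bi-partition'' as allowing an empty block. Then $S_2$ is itself a class. Now $x\rplus y$ implies $x\R_1^+y$, so $x$ and $y$ lie in the same block. Conversely, $x\rminus y$ implies $x\R_1^-y$, hence $\neg x\R_1^+y$ by non-overlapping, so $x$ and $y$ lie in different blocks.

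\textbf{($2\Rightarrow 1$).} Define $x\R_1^+y$ iff $x,y$ lie in the same block, and $x\R_1^-y$ otherwise. This is a \cc{} \ssf{} extending $\mathcal{F}$ by the hypothesis of~2. Local balance reduces to parity arithmetic on block indices in $\mathbb{Z}_2$: two ``same'' steps or two ``different'' steps compose to ``same'', and a mixed pair composes to ``different''.

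\textbf{($2\Rightarrow 3$).} Label each vertex by its block index in $\mathbb{Z}_2$. Along a simple cycle, each negative edge flips the label and each positive edge preserves it. Since the cycle returns to its starting vertex, the number of flips, i.e.\ of negative edges, is even.

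\textbf{($3\Rightarrow 2$).} This is the main obstacle. Work in each connected component of the graph whose edge set is $\rplus\cup\rminus$, minus loops. Fix a root $r$ and a spanning tree, obtained by choice (Zorn) if $X$ is infinite. Label each vertex by the parity of the number of negative edges on its tree path from $r$. It remains to check every non-tree edge $\{u,v\}$.

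The difficulty is that the tree paths to $u$ and $v$ together with $\{u,v\}$ form a closed walk, not necessarily a simple cycle. Write $a$ for the last common vertex of the two tree paths. The path $a\to u$, the edge $u\to v$, and the path $v\to a$ then form a simple cycle, because tree paths from $a$ into distinct subtrees are disjoint. The shared segment $r\to a$ contributes the same parity to both labels. Hypothesis~3 applied to this simple cycle therefore shows that the labels of $u$ and $v$ differ exactly when $u\rminus v$.

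Edges of both signs cannot coexist on the same pair, by non-overlapping. Loops are necessarily positive, since $\rminus$ is irreflexive, and are consistent with any labelling. Taking $S_i$ as the union over components of the vertices with label $i$ gives the required bi-partition, with components assigned independently.
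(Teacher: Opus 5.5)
Your proof is correct. The paper gives no proof of this statement: it quotes the classical (Harary) balance theorem from the literature and uses it as a black box, so there is no in-paper argument to compare with. Your route is the standard one:
\begin{itemize}
\item for $1\Rightarrow 2$, the completed positive relation $R_1^+$ is an equivalence with at most two classes;
\item for $2\Rightarrow 1$ and $2\Rightarrow 3$, you label vertices by their block in $\mathbb{Z}_2$;
\item for $3\Rightarrow 2$, you use a spanning-tree parity labelling.
\end{itemize}

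The argument makes explicit three things the citation leaves implicit. First, the last-common-vertex construction turns each non-tree edge into a genuine simple cycle, which is exactly the pitfall you flagged: two paths between the same vertices can cross repeatedly, so their union is only a closed walk. Second, the Zorn step makes the result valid for the arbitrary, possibly infinite, set $X$ the paper works with, whereas the classical statement concerns finite graphs. Third, your empty-block convention is required for the statement to be true, not merely a convenient reading. For $X=\{a\}$, or whenever $R^+=X\times X$ (which forces $R^-=\emptyset$), item 1 holds but no split into two nonempty blocks satisfies item 2. This matches the paper's later use of the theorem, where it explicitly assumes $R^-\neq\emptyset$ to obtain two genuine groups.

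One cosmetic point: in $3\Rightarrow 2$, the case $a\in\{u,v\}$ is not literally two ``distinct subtrees''. The same argument still works, and the cycle has length at least $3$ because $\{u,v\}$ is not a tree edge.
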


The Balance Theorem above, which assumes $\rminus$ to be symmetric,
establishes a correspondence between a local balance condition and a \emph{global} notion
of balance. In particular, it shows that a balanced frame can be partitioned into two
antagonistic groups such that agents within each group are either unrelated or
friends (never enemies), while agents across groups are either unrelated or
enemies (never friends).
Moreover, this holds iff the frame can be ``completed'' into a \cc~network in
which all triangles satisfy the local balance property.
Consequently, in any balanced \cc~network,
the two resulting groups satisfy that any 
agents within the same group are
positively related, whereas any agents belonging to different groups are negatively
related. \footnote{
    This corresponds to a form of \emph{polarization}, where agents can be
    partitioned into opposing groups.
}
Finally, the above two  conditions are equivalent to the requirement that every simple cycle
of the graph (i.e., a path whose first and last nodes coincide and whose
intermediate nodes are distinct) contains an even number of negative relations.
Observe that  only configurations (a) and (b) in~\Cref{fig:balance} satisfy this
property.

\begin{theorem}[Weak Balance~\cite{davis-weak}]\label{thm:weak}
Let $\mathcal{F}=\langle X,\rplus,\rminus \rangle$ be a \ssf. The three
following properties are equivalent:
\begin{enumerate}
	\item There exists a \cc \ssf~ $\mathcal{F'}=\langle X,\R_1^+,\R_1^-\rangle$
	such that $\rplus \subseteq \R_1^+$ and $\rminus \subseteq \R_1^-$
	with the weak local balance property (i.e., all the triangles in $\cF'$ have the weak local property). 
\item There exists a partition $\{S_i\}_{i\in I}$ of $X$  such that $\forall x,y\in X$: 
		If $x\rplus y$, then $x,y\in S_j$ for some $j\in I$; and 
		If $x\rminus y$, then $x\in S_j$ and $y \in S_k$ for some $j\neq k$. 
	\item There are no cycles in $\mathcal{F}$ with exactly one negative edge. 
\end{enumerate}
\end{theorem}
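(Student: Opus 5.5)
I would prove the three equivalences in the cycle (1)$\Rightarrow$(2)$\Rightarrow$(3)$\Rightarrow$(1), working throughout with the symmetric signed frame $\mathcal{F}$. The guiding idea is that, in a \cc~\ssf~with the local weak balance property, $\R_1^+$ is an equivalence relation: it is reflexive and symmetric by \Cref{def:signed}, and transitive by clause (1) of \Cref{def:local-weak-balance}. Its classes are the candidate partition.

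\textbf{(1)$\Rightarrow$(2).} I take $\{S_i\}_{i\in I}$ to be the equivalence classes of $\R_1^+$. If $x\rplus y$, then $x\R_1^+ y$, so $x$ and $y$ lie in the same class. If $x\rminus y$, then $x\R_1^- y$, and by non-overlapping $\neg x\R_1^+ y$, so $x$ and $y$ lie in different classes.

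\textbf{(2)$\Rightarrow$(3).} Suppose some cycle had exactly one negative edge, say between $u$ and $v$. The remaining edges of the cycle form a path of positive edges from $u$ to $v$. Following this path, condition (2) keeps every vertex in one block, so $u$ and $v$ share a block. But condition (2) applied to $u\rminus v$ (using symmetry of $\rminus$ to ignore orientation) places $u$ and $v$ in different blocks, a contradiction.

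\textbf{(3)$\Rightarrow$(1).} This is the main step.
\begin{itemize}
\item Let $\sim$ be the reflexive--transitive closure of $\rplus$. It is an equivalence relation because $\rplus$ is symmetric, and its classes are the positive-connected components.
\item Define $x\R_1^+ y$ iff $x\sim y$, and $x\R_1^- y$ iff $x\not\sim y$. Both relations are symmetric, they do not overlap, the frame is collectively connected, and $\R_1^+$ is reflexive.
\item The inclusion $\rplus\subseteq\R_1^+$ is immediate.
\item For $\rminus\subseteq\R_1^-$: suppose $x\rminus y$ but $x\sim y$. If $x=y$, this contradicts irreflexivity of $\rminus$, which follows from reflexivity of $\rplus$ and non-overlapping. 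Otherwise there is a positive path from $y$ back to $x$, which can be shortened to a simple one. Together with the negative edge it gives a cycle with exactly one negative edge, contradicting (3).

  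The obstacle here is degenerate cycles. If the simple positive path is just the edge $y\rplus x$, then $x$ and $y$ are both positively and negatively related, which non-overlapping forbids. So every such cycle is genuine, of length at least $3$, or at length $2$ it is ruled out by non-overlapping.
\item Finally, check local weak balance for $\R_1^\pm$. Clause (1) holds by transitivity of $\sim$. Clause (2) holds because if $x\sim y$ and $y\not\sim z$, then $x\not\sim z$.
\end{itemize}

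I expect the only delicate point to be the precise reading of ``cycle'' in (3). It has to be closed walks, or equivalently simple cycles once paths are shortened, and the proof must handle $2$-cycles and self-loops via non-overlapping and irreflexivity of $\rminus$.
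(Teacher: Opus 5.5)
Your proof is correct. The paper gives no proof of this statement: it quotes it from Davis~\cite{davis-weak} and uses it as a black box, for instance in the ($\Rightarrow$) direction of Theorem~\ref{th:new-weak-balance}. So there is no in-paper argument to compare against. Your cycle (1)$\Rightarrow$(2)$\Rightarrow$(3)$\Rightarrow$(1) is the standard argument: positive-connected components serve as the partition, and the completion is $R_1^+={\sim}$, $R_1^-={\not\sim}$.

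There are two small points.
\begin{itemize}
\item When you check clause (2) of local weak balance, you treat only the disjunct $x\sim y$, $y\not\sim z$. The other disjunct, $x\not\sim y$, $y\sim z$, is symmetric, but you should state it.
\item Your treatment of degenerate cycles is right, and worth making explicit. Self-loops are excluded by irreflexivity of $\rminus$, and $2$-cycles by non-overlapping. As a result, (3)$\Rightarrow$(1) only needs the absence of simple cycles of length at least $3$, while (2)$\Rightarrow$(3) refutes arbitrary closed walks. The equivalence therefore holds under any reasonable reading of ``cycle''.
\end{itemize}

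Your key observation in (1)$\Rightarrow$(2) is that $R_1^+$ is reflexive, symmetric and transitive. This is exactly what makes $R_1^+$ collusive in the paper's Theorem~\ref{th:new-weak-balance}, since reflexive collusive relations are precisely the equivalence relations.
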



 In this case, a weakly balanced network can be partitioned
into $n$ groups rather than just two. Within these groups, agents are unrelated or friends,
while negative relations are present between members of different groups. This
is a direct consequence of accepting configuration (c) in~\Cref{fig:balance},
where agents $x$, $y$, and $z$ can each belong to a distinct group, a situation
that is not possible in the (strong) balance case. 

\subsection{A new Characterization for Balance}\label{sec:new:balance}

This section presents a novel and fourth characterization of the strong and weak balance
theorems in terms of collusions.

\begin{theorem}[Balance]\label{th:new-balance-strong}
Let $\mathcal{F}=\langle X,\R^+,\R^-\rangle$ be a \cc\ssf. Then, 
$\mathcal{F}$ is balanced iff both $R^+$ and $R^-$ are collusive. 
\end{theorem}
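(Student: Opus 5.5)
Since $\mathcal{F}$ is a \cc\ssf, the completion in item~1 of \Cref{thm:balance} can be taken to be $\mathcal{F}$ itself. So ``balanced'' will mean that $\mathcal{F}$ has the local balance property of \Cref{def:local-balance}. Equivalently, by \Cref{thm:balance}, there is a bipartition $S_1,S_2$ of $X$ with $\R^+$ inside blocks and $\R^-$ across blocks. Because $\mathcal{F}$ is \cc\ and $\R^+,\R^-$ do not overlap, in the \cc\ case we in fact get $x\R^+y \iff$ same block and $x\R^-y\iff$ different blocks. I will prove both directions using this description and the local triad conditions.

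($\Rightarrow$) Assume $\mathcal{F}$ is balanced. For $\R^+$, take $x\R^+y$, $x\R^+z$, $w\R^+y$. Then $x,y,z$ lie in one block and $w$ lies in the block of $y$. Hence $w,z$ are in the same block, so $w\R^+z$.

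For $\R^-$, take $x\R^-y$, $x\R^-z$, $w\R^-y$. Then $y$ and $z$ are both in the block opposite to $x$, so they share a block. Also $w$ is opposite to $y$, hence opposite to $z$. The bipartition has only two blocks, so ``opposite'' is well defined, and we get $w\R^-z$. Alternatively, this can be derived purely locally from \Cref{def:local-balance} together with symmetry: from $y\R^-x$ and $x\R^-z$ we get $y\R^+z$, and from $w\R^-y$ and $y\R^+z$ we get $w\R^-z$.

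($\Leftarrow$) Assume $\R^+$ and $\R^-$ are collusive. I will verify the four clauses of \Cref{def:local-balance} directly, using symmetry, reflexivity of $\R^+$, \cc, and non-overlap. Instantiating collusivity of $\R^+$ with $x:=y$ and using symmetry, $y\R^+x$, $y\R^+y$, $y\R^+z$ yield $x\R^+z$ from $x\R^+y\wedge y\R^+z$. That is transitivity.

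For $x\R^+y\wedge y\R^-z\imp x\R^-z$, I argue by contradiction. By \cc, if not $x\R^-z$ then $x\R^+z$. Transitivity and symmetry then give $y\R^+z$, contradicting non-overlap with $y\R^-z$. The mixed case $x\R^-y\wedge y\R^+z$ follows by symmetry.

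The main obstacle is the remaining clause, ``enemy of my enemy is a friend'': from $x\R^-y\wedge y\R^-z$ we must get $x\R^+z$. Here I use collusivity of $\R^-$. Suppose instead $x\R^-z$, i.e., $z\R^-x$. Apply $Q_3$ to $y\R^-x$, $y\R^-z$, $z\R^-x$; it yields $z\R^-z$. This contradicts the irreflexivity of $\R^-$, which follows from reflexivity of $\R^+$ and non-overlap. Therefore, by \cc, $x\R^+z$.

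This establishes local balance on $\mathcal{F}$ itself, so condition~1 of \Cref{thm:balance} holds with $\mathcal{F}'=\mathcal{F}$. The step to handle with care is this use of $Q_3$ on $\R^-$ together with irreflexivity. It is where collusivity of $\R^-$ genuinely matters, and it rules out configuration (c) in \Cref{fig:balance}.
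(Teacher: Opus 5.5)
Your proof is correct and is essentially the paper's argument. For $(\Rightarrow)$ the paper uses the local triad rules exactly as in your ``alternative'' derivation, and your block argument is an equivalent repackaging. For $(\Leftarrow)$ the paper rules out configurations (c) and (d) with the same two facts you use: symmetry plus collusivity of $R^-$ forces a self-loop against irreflexivity, and reflexivity plus collusivity of $R^+$ clashes with non-overlap. The paper phrases this as a contradiction, whereas you check each clause directly using collective connectedness.

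One cosmetic slip: the $Q_3$ instance giving transitivity of $R^+$ should have premises $yR^+y$, $yR^+z$, $xR^+y$ (or $yR^+x$, $yR^+z$, $xR^+x$) rather than the three facts you list, though the conclusion $xR^+z$ is right.
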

\begin{proof}
($\Rightarrow$ side).  Assume that $\mathcal{F}$ is balanced. By
Theorem \ref{thm:balance}, and the fact that $\cF$ is \cc, there exists a bi-partition $S_1, S_2$
of $X$ s.t. two elements in $S_i$ ($i\in \{1,2\}$)  are $\R^+$-connected and 
elements in  $S_i$ and $S_j$, $i\neq j$,  are $R^-$-connected. 
To show that $\rminus$ is collusive assume that $x\rminus z$, $y\rminus z$ and $x\rminus w$.
Due to balance, it must be the case that $x\rplus y$.
Since $x\rminus w$ and  $x\rplus y$, by balance, 
$y\rminus w$ as needed. 
To show that $\rplus $ is collusive assume that $x\rplus z$, $y\rplus z$ and $x\rplus w$.
By balance, $x\rplus y$ and also $y\rplus w$ as needed. 

\noindent($\Leftarrow$ side). Assume that both $\rplus $ and $\rminus $ are collusive and, to find a 
contradiction, 
that $\mathcal{F}$ is not balanced. Then, there exists a triangle that does
not satisfy the local balance property and there are two cases:

\noindent \textbf{(i)} 
     Configuration (c) in Figure \ref{fig:balance}. Since $\rminus $
        is symmetric, and collusive, it must be the case that $x\rminus x$, which is not possible
        since $\rminus$ is irreflexive; and 

\noindent \textbf{(ii)} 
   Configuration (d). Since $\rplus $ is reflexive, 
        we have $x\rplus z$, $y\rplus z$ and $x\rplus x$. Since $\rplus $ is collusive, 
        it must be the case that $x\rplus y$, which is not possible since the relations
        are  non-overlapping. 
\end{proof}

Note that if  $\rminus=\emptyset$, trivially $\mathcal{F}$ is balanced, since
all the relations will be as in configuration (a) in~\Cref{fig:balance}. The next
corollary shows that the collusiveness of both $\rplus$ and $\rminus$, when $\rminus$ is not
empty, are necessary and sufficient conditions for a \cc\ssf~to have the
balance property.

\begin{corollary}
Let $\mathcal{F}=\langle X,\rplus ,\rminus \rangle$ be a \cc~\ssf~
and $\rminus \neq \emptyset$. 
$\mathcal{F}$ is balanced iff $R^+$ and $R^-$ are both collusions.
\end{corollary}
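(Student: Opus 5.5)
Since Theorem~\ref{th:new-balance-strong} already gives the equivalence between balance and collusiveness of $\rplus$ and $\rminus$ for a \cc\ssf, the plan is to reduce the corollary to it. Recall that a collusion is a relation that is collusive, total and surjective (Definition~\ref{def:collusion}). The direction ($\Leftarrow$) is immediate: if $\rplus$ and $\rminus$ are collusions they are in particular collusive, and Theorem~\ref{th:new-balance-strong} gives balance. For ($\Rightarrow$), the same theorem yields collusiveness of both relations, so it remains to show that each is total and surjective.

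For $\rplus$ this is free. It is reflexive by Definition~\ref{def:signed}, so every $x$ satisfies $x\rplus x$, which witnesses both totality ($\exists y.\,x\rplus y$) and surjectivity ($\exists y.\,y\rplus x$). No use of balance or of $\rminus\neq\emptyset$ is needed here.

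For $\rminus$ I would exploit symmetry, which makes totality and surjectivity equivalent, so proving totality suffices. Since $\rminus\neq\emptyset$, pick $a\rminus b$. Let $x\in X$ be arbitrary. Because $\mathcal{F}$ is \cc, either $x\rplus a$ or $x\rminus a$. If $x\rplus a$, then from $x\rplus a$ and $a\rminus b$ clause (2) of local balance (Definition~\ref{def:local-balance}) gives $x\rminus b$. If $x\rminus a$, we are done directly. In either case $x$ has an $\rminus$-successor, so $\rminus$ is total, and by symmetry it is surjective.

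An alternative route, should a reader prefer, is to use the bipartition $S_1,S_2$ from Theorem~\ref{thm:balance} in the \cc{} case. Here $\rminus\neq\emptyset$ forces both parts to be nonempty, and every $x$ is then $\rminus$-related to every element of the opposite part. The only step needing any care is this one, namely where the hypothesis $\rminus\neq\emptyset$ enters. It is genuinely necessary: if $\rminus=\emptyset$ the frame is trivially balanced but $\rminus$ is neither total nor surjective, which explains the extra hypothesis in the corollary.
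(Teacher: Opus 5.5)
Your proof is correct. It treats the ($\Leftarrow$) direction, which the paper leaves implicit, and the relation $\rplus$ exactly as the paper does. The difference is in how you obtain totality and surjectivity of $\rminus$.

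The paper uses the global structure theorem. By \Cref{thm:balance} and collective connectedness there is a bipartition $S_1,S_2$ in which every element of one block is $\rminus$-related to every element of the other. The hypothesis $\rminus\neq\emptyset$ makes both blocks non-empty, so each element has an $\rminus$-successor and an $\rminus$-predecessor. This is precisely your ``alternative route''.

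Your main argument instead uses a single instance of clause (2) of \Cref{def:local-balance} together with collective connectedness. It is therefore self-contained and does not depend on the classical balance theorem.

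It is also slightly more robust. You use symmetry of $\rminus$ only to pass from totality to surjectivity, and even that can be avoided. Collective connectedness on the pair $(b,x)$ gives $b\rminus x$ or $b\rplus x$. In the latter case, the second disjunct of clause (2), applied to $a\rminus b$ and $b\rplus x$, yields $a\rminus x$. So this step would carry over to the non-symmetric setting of \Cref{sec:b-gen}, where \Cref{thm:balance} is not available. What the paper's route gives in return is the full picture of $\rminus$ as the complete bipartite relation between two non-empty blocks.

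One small point: your closing remark that $\rminus\neq\emptyset$ is necessary tacitly assumes $X\neq\emptyset$.
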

\begin{proof}
    By~\Cref{th:new-balance-strong} we know that $R^+$ and $R^-$ are both collusive. 
    By reflexivity of $\rplus $, this relation is also a collusion. 
    If $\rminus \neq \emptyset$, the existence of the bi-partition is guaranteed 
    (in virtue of~\Cref{thm:balance}), 
        where all the elements of one group are $\rminus $-connected to 
    all the elements in the other one. Therefore, 
    $\rminus $ is surjective and total and,
    by~\Cref{th:new-balance-strong}, it is a collusion. 
\end{proof}

Now we prove a similar result characterizing the weak-balance property. In this
case, our theorem  shows that the collusiveness of $\rplus$ is a sufficient and
necessary condition for the frame to have this property. 

\begin{theorem}[Weak Balance]\label{th:new-weak-balance}
Let $\mathcal{F}=\langle X,\rplus ,\rminus \rangle$ be a \cc~\ssf~
$\mathcal{F}$ is weak balanced iff $R^+$ is collusive. 
\end{theorem}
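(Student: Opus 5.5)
We follow the pattern of the proof of \Cref{th:new-balance-strong}, but work directly with the local weak balance conditions of \Cref{def:local-weak-balance}. The point is that in a \cc~\ssf, the local weak balance property is equivalent to transitivity of $\rplus$. Collusiveness of a reflexive symmetric relation also amounts to transitivity.

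For the ($\Rightarrow$) direction, assume $\mathcal{F}$ is weakly balanced. Suppose $x\rplus z$, $y\rplus z$ and $x\rplus w$; we must show $y\rplus w$.
\begin{enumerate}
\item By symmetry of $\rplus$ we get $z\rplus x$, hence $y\rplus z$ and $z\rplus x$.
\item Condition (1) of \Cref{def:local-weak-balance} then gives $y\rplus x$.
\item Applying (1) again to $y\rplus x$ and $x\rplus w$ yields $y\rplus w$.
\end{enumerate}
Here connectedness is not even needed; only symmetry and the positive-triad condition are used.

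For the ($\Leftarrow$) direction, assume $\rplus$ is collusive. We first derive transitivity of $\rplus$. Given $x\rplus y$ and $y\rplus z$, symmetry gives $y\rplus x$. Reflexivity gives $y\rplus y$, and symmetry gives $z\rplus y$. Collusiveness applied to $y\rplus y$, $y\rplus x$, $z\rplus y$ yields $z\rplus x$, so $x\rplus z$ by symmetry. This establishes condition (1).

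For condition (2), suppose $x\rplus y$ and $y\rminus z$, and assume $\neg x\rminus z$.
\begin{enumerate}
\item Since $\mathcal{F}$ is \cc, we get $x\rplus z$.
\item Transitivity with $y\rplus x$ gives $y\rplus z$.
\item This contradicts non-overlapping, since $y\rminus z$.
\end{enumerate}
The case $x\rminus y$, $y\rplus z$ is symmetric. Assuming $\neg x\rminus z$, we get $x\rplus z$. Transitivity with $z\rplus y$ gives $x\rplus y$, again contradicting non-overlapping. Equivalently, one argues by contradiction on the forbidden configuration (d) of \Cref{fig:balance}, exactly as in case (ii) of the proof of \Cref{th:new-balance-strong}: there reflexivity plus collusiveness forces a positive edge where a negative one sits.

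The only delicate point is the use of collective connectedness. It is exactly what allows us to turn ``not negative'' into ``positive'' in condition (2), so that transitivity of $\rplus$ can be applied. No condition on $\rminus$ (collusiveness or otherwise) is required. This matches the fact that configuration (c) is admitted in weak balance, so it needs no case analysis.
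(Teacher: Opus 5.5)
Your proof is correct and follows essentially the same route as the paper. For ($\Rightarrow$) you chain positive transitivity twice, reading it directly off condition (1) of Definition~\ref{def:local-weak-balance} rather than going through the partition of Theorem~\ref{thm:weak}, which also lets you do without collective connectedness there; for ($\Leftarrow$) you use the paper's reflexivity-plus-collusiveness argument, written as a transitivity lemma followed by a direct check of conditions (1) and (2), which makes explicit the use of collective connectedness that the paper leaves implicit when it reduces every failure of local weak balance to configuration (d).
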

\begin{proof}
($\Rightarrow$ side).  Assume that $\mathcal{F}$ is weakly balanced. By
Theorem \ref{thm:weak}, there are  partitions $S_1,\cdots, S_n$
with the expected properties. 
To show that $\rplus $ is collusive, assume that $x\rplus z$, $y\rplus z$ and $x\rplus w$.
By (weak) balance, $x\rplus y$ and also $y\rplus w$ as needed. 
The ($\Leftarrow$) side  is 
identical to the ($\Leftarrow$ side) in the proof of~\Cref{th:new-balance-strong}.

\end{proof}

\noindent Since  $\rplus$ is total and surjective, the following result
follows immediately from the previous theorem.

\begin{corollary}
Let $\mathcal{F}=\langle X,\rplus ,\rminus \rangle$ be a \cc~\ssf~
$\mathcal{F}$ is weak balanced iff $\rplus$ is a collusion.
\end{corollary}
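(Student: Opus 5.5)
The corollary should follow from \Cref{th:new-weak-balance} once we check that $\rplus$ is automatically total and surjective in any signed frame. Being a collusion is, by \Cref{def:collusion}, the conjunction of collusiveness, totality and surjectivity. So the equivalence reduces to the one already established for collusiveness, as long as the two extra conditions hold for free.

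For the ($\Leftarrow$) direction, assume $\rplus$ is a collusion. In particular it is collusive, and \Cref{th:new-weak-balance} directly gives that $\mathcal{F}$ is weakly balanced.

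For the ($\Rightarrow$) direction, assume $\mathcal{F}$ is weakly balanced. By \Cref{th:new-weak-balance}, $\rplus$ is collusive. It remains to show that $\rplus$ is total and surjective. By \Cref{def:signed}, $\rplus$ is reflexive, so for every $x\in X$ we have $x\rplus x$. Taking $y=x$ then witnesses both seriality ($\exists y.\, x\rplus y$) and surjectivity ($\exists y.\, y\rplus x$). Hence $\rplus$ is a collusion.

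There is no real obstacle here. The only thing to get right is the observation that reflexivity alone yields totality and surjectivity, so neither the \cc{} assumption nor symmetry is needed for that step. The \cc{} and \ssf{} hypotheses are used only through \Cref{th:new-weak-balance}. This also explains why, unlike the strong-balance corollary, no nonemptiness condition on $\rminus$ is required.
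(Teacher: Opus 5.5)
Your proof is correct and follows the paper's own argument. The paper also notes that $\rplus$ is reflexive, hence total and surjective, so collusiveness of $\rplus$ coincides with $\rplus$ being a collusion, and the corollary then follows immediately from Theorem~\ref{th:new-weak-balance}.
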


\subsection{Generalization of Balance}\label{sec:b-gen}
This section shows how to generalize the (weak) balance theorem 
to the  case where the negative relation is not 
necessarily symmetric. 
Additionally, we show how irreflexive and symmetric
collusions induce balanced configurations. 

\begin{definition}[Classes and equi-targeting relations)]\label{eq:rels}
    We use $[x]_R$ to denote the set $\{y\in X \mid xRy \}$. 
    Moreover, we define the following relations induced by $R$:
    \[
        x \friendsL_R y \defsym \forall z\in X.(zR x \iff z Ry)
        \qquad
        x \friendsR_R y \defsym \forall z\in X.(xR z \iff y Rz)
    \]
\end{definition}

In $x\friendsL y$ we say that $x$ and $y$ are equi-targeted in $R$
and we can imagine $x$ and $y$ to be part of a \emph{coalition} for
defending themselves from their
common enemies. 
In $x\friendsR y$ we say that $x$ and $y$ are equi-targeters in $R$, 
and these agents are in a \emph{coalition}  to attack all their common targets. 

\begin{observation}\label{fr:equiv}
     $\friendsL_R$ and $\friendsR_R$ are  equivalence relations. Moreover, 
    if $R$ is symmetric then  $\friendsL_R = \friendsR_R$. Hence, in the case
    of a symmetric relation, we denote with  $\friends_R$ the induced relation
    in~\Cref{eq:rels}. 
\end{observation}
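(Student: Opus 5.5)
The plan is to verify the claims directly from \Cref{eq:rels}, since both relations are defined as biconditionals quantified over a third element.

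\textbf{Equivalence relations.} For $\friendsL_R$, I will check the three axioms in turn.
\begin{itemize}
\item \emph{Reflexivity:} $zRx \iff zRx$ holds trivially for every $z$.
\item \emph{Symmetry:} the biconditional $zRx \iff zRy$ is symmetric in $x$ and $y$, so $x \friendsL_R y$ gives $y \friendsL_R x$.
\item \emph{Transitivity:} suppose $x \friendsL_R y$ and $y \friendsL_R w$. Fix an arbitrary $z$. Chaining $zRx \iff zRy$ with $zRy \iff zRw$ yields $zRx \iff zRw$, so $x \friendsL_R w$.
\end{itemize}
The argument for $\friendsR_R$ is identical, with $xRz$ in place of $zRx$. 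A cleaner way to see it: $\friendsR_R$ is just $\friendsL_{R^{-1}}$, so the first case already covers it. In general, any relation of the form $f(x)=f(y)$ is an equivalence; here $f(x)$ is the set of $R$-predecessors of $x$ (respectively, its $R$-successors $[x]_R$).

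\textbf{Coincidence under symmetry.} Assume $R$ is symmetric. Then for all $x,z$ we have $zRx \iff xRz$. Rewriting both sides of the biconditional defining $x \friendsL_R y$ this way turns $\forall z.(zRx \iff zRy)$ into $\forall z.(xRz \iff yRz)$, which is exactly $x \friendsR_R y$. So the two relations coincide, which justifies the single notation $\friends_R$.

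There is no real obstacle here. The only point needing care is the rewriting step in the symmetric case: it must be done under the universal quantifier over $z$, using symmetry in both directions on each side of the biconditional.
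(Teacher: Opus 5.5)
Your proposal is correct and follows the same approach as the paper, whose proof is simply ``trivial from the definition''; you spell out the reflexivity, symmetry and transitivity checks and the rewriting of the biconditional under symmetry of $R$, which the paper leaves implicit. Your observation that the equi-targeter relation of $R$ is the equi-targeted relation of $R^{-1}$ is also correct, and the paper later cites it from prior work.
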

\begin{proof}
    Trivial from the definition. 
\end{proof}

The next theorem shows that irreflexive collusions induce partitions
on the set $X$  where elements of the same partition do not attack
to each other.
\begin{theorem}\label{th:irr-col}
    If $R$ is an irreflexive collusion, then:
    (1) the set $\{[x]_R~\mid~ x\in X\}$ is a partition of $X$; and 
    (2) for all $x,y,y'\in X$, if $y,y' \in  [x]_R$ then  $\neg yR y'$.  
\end{theorem}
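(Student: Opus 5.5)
Since $R$ is a collusion (collusive, total and surjective in the sense of \Cref{def:collusion}), I would prove (1) by checking the three partition conditions for the family $\{[x]_R \mid x\in X\}$ of targets.

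\textbf{Nonemptiness.} By totality, each $x$ has some $y$ with $xRy$. Hence $[x]_R\neq\emptyset$.

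\textbf{Covering.} By surjectivity, every $z\in X$ has some $x$ with $xRz$. Hence $z\in[x]_R$.

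\textbf{Disjoint or equal.} This is the central use of collusiveness. Suppose $y\in[x]_R\cap[w]_R$, i.e.\ $xRy$ and $wRy$. For any $z\in[x]_R$ we have $xRy\wedge xRz\wedge wRy$, so $\mathbf{Q_3}$ gives $wRz$. Thus $[x]_R\subseteq[w]_R$, and by the symmetric instance (swapping the roles of $x$ and $w$) we get equality. Two classes that share an element therefore coincide.

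For (2), take $y,y'\in[x]_R$, so $xRy$ and $xRy'$, and suppose toward a contradiction that $yRy'$. Then $xRy'$ and $yRy'$ say that $x$ and $y$ share the target $y'$. Collusiveness applied to $xRy'$, $xRy$, $yRy'$ (i.e.\ $\mathbf{Q_3}$ with $x:=x$, $y:=y'$, $z:=y$, $w:=y$) yields $yRy$, contradicting irreflexivity. Equivalently, by the disjoint-or-equal step, $[x]_R=[y]_R$, so $y\in[y]_R$, i.e.\ $yRy$.

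I expect no real obstacle. The only delicate point is instantiating the variables of $\mathbf{Q_3}$ correctly, in particular allowing the repeated variable $w=z=y$, which the universally quantified formula permits. Irreflexivity is used only in (2). Part (1) holds for any collusion: there the role of irreflexivity is only to rule out the degenerate case in which a point lies in its own class.
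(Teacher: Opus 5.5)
Your proof is correct. It differs from the paper's in that the paper gives no direct argument: it gets (1) from Theorem~2 of \cite{joinet:hal-02369662} (a relation is a collusion iff its targeting classes form a partition) and (2) from Lemma~13 of the same reference.

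You derive both claims from first principles:
\begin{itemize}
\item totality gives nonempty classes;
\item surjectivity gives covering;
\item two instances of $\mathbf{Q_3}$, with the roles of $x$ and $w$ swapped, show that intersecting classes coincide;
\item for (2), the single instance $(xRy' \wedge xRy \wedge yRy') \imp yRy$ of $\mathbf{Q_3}$, together with irreflexivity, suffices.
\end{itemize}

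Your route makes the proof self-contained and shows exactly which hypotheses each part uses. Part (2) needs only collusiveness and irreflexivity, not totality or surjectivity. Part (1) does not use irreflexivity at all; your closing remark slightly blurs this, since the fact $x\notin[x]_R$ matters only for (2) and later for Theorem~\ref{th:new-balance}.

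The paper's citation buys brevity and links the result to the general theory of collusional quotients. That includes the converse direction (partition implies collusion) of the cited theorem, which is not needed here.
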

\begin{proof}
    By~\cite[Theorem 2]{joinet:hal-02369662}  ($R$ is a collusion iff $\{[x]_R\}$ is a partition)
    we conclude (1). 
    By~\cite[Lemma 13]{joinet:hal-02369662}  (for any irreflexive collusion $R$, targeting class $[x]_R$ 
        and elements $y,y'\in [x]_R$, $\neg y R y'$) we conclude (2). 
\end{proof}

\begin{example}
    Let \( X = \{1, 2, 3\} \) and \( R = \{(1,2), (2,3), (3,1)\} \).
Since \( R \) is an irreflexive collusion, it follows from~\Cref{th:irr-col}
that \( X \) can be partitioned
as $\{[1]_R, [2]_R, [3]_R\}$ (where elements of the same class do not attack to each other). 
Since $R$ is not symmetric, \Cref{thm:weak} cannot be applied here to guarantee the existence of such a partition.
\end{example}


%

The next theorem shows that irreflexive collusions induce a 
weakly balanced signed frame (where $\rminus$ is not necessarily symmetric). 

\begin{theorem}[Generalized weak balance]\label{th:new-balance}
    Let $R$ be an irreflexive collusion on $X$. Then the structures $\cF= \langle X, \friendsL_R, R \rangle$
    and $\cF= \langle X, \friendsR_R, R \rangle$
    are weakly balanced signed frames. 
\end{theorem}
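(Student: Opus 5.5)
The plan is to check the clauses of \Cref{def:signed} and \Cref{def:local-weak-balance} one at a time for $\rplus=\friendsL_R$ and for $\rplus=\friendsR_R$, with $\rminus=R$ in both cases. Throughout I would use the block description of collusions behind \Cref{th:irr-col} (\cite{joinet:hal-02369662}). A collusion $R$ can be written as $R=\bigcup_{i\in I}A_i\times B_i$. Here $\{A_i\}$ is the partition of $X$ into $\friendsR_R$-classes, and $\{B_i\}=\{[x]_R\}$ is the partition into $\friendsL_R$-classes. The correspondence $A_i\mapsto B_i$ is a bijection, and irreflexivity amounts to $A_i\cap B_i=\emptyset$. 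A direct argument for the $\friendsL_R$ half of this description is the following. If $u$ attacks both $x$ and $y$, then any $w$ attacking $x$ shares the target $x$ with $u$, so by collusiveness $w$ attacks $y$ as well. Together with surjectivity, this shows that $x\friendsL_R y$ holds iff $x$ and $y$ have a common attacker.

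The easy clauses are as follows. Reflexivity and symmetry of $\rplus$ are given by \Cref{fr:equiv}. Clause (1) of local weak balance is transitivity of an equivalence relation. Non-overlap also holds in both cases. If $x\friendsL_R y$ and $xRy$, then $x$ is an attacker of $y$, hence of $x$, so $xRx$, contradicting irreflexivity. If $x\friendsR_R y$ and $xRy$, then $y$ has the same targets as $x$, so $yRy$, again a contradiction.

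For clause (2) of local weak balance, one of the two disjuncts is immediate in each case. For $\friendsR_R$, the first disjunct holds: from $x\friendsR_R y$ and $yRz$ we get $xRz$ by definition. For $\friendsL_R$, the second disjunct holds: from $xRy$ and $y\friendsL_R z$ we get $xRz$ by definition. The remaining disjunct is the real work. For $\friendsL_R$ we need that $x\friendsL_R y$ and $yRz$ imply $xRz$, i.e.\ equi-targeted elements are equi-targeters. For $\friendsR_R$ we need the dual statement. In block terms, both come down to showing that every $B_i$ lies inside a single $A_j$ and conversely, so that the two partitions coincide. My first attempt would be to derive this from collusiveness together with irreflexivity, using surjectivity to choose a common attacker $u$ of $x$ and $y$.

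This step is where I expect the main obstacle, and I suspect it fails as stated. Take $X=\{1,2,3,4\}$ and $R=\{1\}\times\{2,3\}\cup\{2\}\times\{4\}\cup\{3,4\}\times\{1\}$. This $R$ is an irreflexive collusion. However, $2\friendsL_R 3$ and $3R1$ hold while $\neg 2R1$, which violates clause (2) for $\friendsL_R$. Likewise, $1R3$ and $3\friendsR_R 4$ hold while $\neg 1R4$, which violates clause (2) for $\friendsR_R$. So, before writing a proof, I would check whether an extra hypothesis is intended. With symmetry of $R$ we get $\friendsL_R=\friendsR_R$, the two partitions coincide, and the argument above goes through. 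Alternatively, one could impose the condition $\{A_i\}=\{B_i\}$ directly, or weaken the conclusion to the one-sided clauses that do hold.
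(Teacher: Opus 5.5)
Your counterexample is correct, but it refutes a reading of ``weakly balanced'' that the paper does not use. You require the frame $\langle X,\friendsL_R,R\rangle$ itself to satisfy the closure rules of \Cref{def:local-weak-balance}. These frames are not collectively connected in general: in your example, $3$ and $4$ are related neither by $\friendsL_R$ nor by $R$. For such frames, weak balance is the global notion of \Cref{thm:weak}. It asks for a partition $\{S_i\}$ of $X$ in which $\rplus$-related elements share a block and $\rminus$-related elements lie in different blocks, or equivalently for a collectively connected completion that has the local property. Even an ordinary symmetric frame with $\rminus=\emptyset$ and a non-transitive $\rplus$ is weakly balanced in this sense while failing clause (1) locally. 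The paper's proof establishes exactly this partition condition, using the classes $\{[x]_R\}_{x\in X}$. That is the missing step in your proposal. You do not need to show that equi-targeted elements are equi-targeters, which is false, as you found. You only need to exhibit the partition.

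Your own block description gives the partition at once, and together with your correct checks of reflexivity, symmetry and non-overlap it completes the proof.
\begin{itemize}
\item For $\friendsL_R$, take $\{B_i\}=\{[x]_R\}$. Its blocks are exactly the $\friendsL_R$-classes, since you showed that $x\friendsL_R y$ iff $x$ and $y$ have a common attacker. If $xRy$ with $y\in B_i$, then $x\in A_i$, so $x\notin B_i$ because $A_i\cap B_i=\emptyset$.
\item For $\friendsR_R$, take $\{A_i\}$ and argue symmetrically. The paper instead reduces this case to the first one, using $\friendsR_R=\friendsL_{R^{-1}}$ and closure of collusions under converse.
\end{itemize}
In your example, the partitions $\{\{1\},\{2,3\},\{4\}\}$ and $\{\{1\},\{2\},\{3,4\}\}$ witness weak balance of the two frames. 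Your observation is still worth keeping as a remark: without symmetry of $R$, these frames need not satisfy the local weak-balance rules before they are completed.
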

\begin{proof}
    We are defining $\rplus$ as $\friendsL_R$ and $\rminus$ as $R$. 
    Due to~\Cref{fr:equiv}, 
    $\rplus$ is reflexive and  symmetric.
    \textbf{Non-overlapping}: 
    To reach a contradiction, assume that 
     $x\friendsL_R y$ and $xRy$. Then, $xRx$ and 
     $R$ cannot be irreflexive. 
    \textbf{Weak balance}: 
    By~\Cref{th:irr-col}, 
    $\{[x]_R\}_{x\in X}$ is a partition of $X$.
    Assume that $x\rminus y$, i.e., $xRy$. In this case, $y\in [x]_R$ and, 
    due to irreflexivity of $R$, $x\not\in [x]_R$. 
    Hence, $x$ and $y$ are necessarily in two different classes. 
    Now assume that $x\rplus y$, i.e., $x\friendsL_R y$.
    Since $R$ is surjective, 
    there exists $z\in X$ s.t. $zR x$. Since
    $x\friendsL_R y$, we  also have $zR y$
    and then, $x,y\in [z]_R$. 
    The case $\friendsR_R$  reduces to the former case. Indeed, By~\cite[Lemma 7]{joinet:hal-02369662},
    $\friendsR_R = \friendsL_{R^{-1}}$ and by~\cite[Lemma 6]{joinet:hal-02369662},
    collusions are closed by converse.
\end{proof}

\begin{corollary}
    Let $R$ be an irreflexive collusion. Then the structures 
    $\cF= \langle X, \friendsL_R, R\cup R^{-1} \rangle$ and 
    $\cF= \langle X, \friendsR_R, R\cup R^{-1} \rangle$
    are weakly balanced \ssf 
\end{corollary}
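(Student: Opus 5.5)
The plan is to reduce everything to \Cref{th:new-balance}, which already does the real work. The frames in the corollary differ from those of \Cref{th:new-balance} only in that the negative relation is symmetrized to $R\cup R^{-1}$. So I would check the three signed-frame conditions of \Cref{def:signed}, then symmetry of the negative relation (the \ssf\ condition), and finally the partition characterization of weak balance (item 2 of \Cref{thm:weak}). Since the frame is now an \ssf, \Cref{thm:weak} applies, and the partition property is equivalent to weak balance.

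\textbf{Frame conditions.} By \Cref{fr:equiv}, $\friendsL_R$ is an equivalence relation, hence reflexive and symmetric. The relation $R\cup R^{-1}$ is symmetric by construction. For non-overlapping, suppose $x\friendsL_R y$ and $x(R\cup R^{-1})y$, and split into two cases.
\begin{itemize}
\item If $xRy$: from $\forall z.(zRx\iff zRy)$ with $z:=x$ we get $xRx$.
\item If $yRx$: with $z:=y$ we get $yRy$.
\end{itemize}
Both contradict irreflexivity, exactly as in the proof of \Cref{th:new-balance}.

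\textbf{Weak balance.} I would take the partition $\{[z]_R\}_{z\in X}$ supplied by \Cref{th:irr-col}(1). For positive edges, the argument of \Cref{th:new-balance} carries over unchanged: surjectivity of $R$ gives some $z$ with $zRx$, and then $x\friendsL_R y$ gives $zRy$, so $x,y\in[z]_R$. For negative edges, if $xRy$ then $x$ and $y$ lie in different classes, as shown in \Cref{th:new-balance}. If instead $yRx$, the same argument with the roles of $x$ and $y$ exchanged puts them in different classes. Being in different classes is a symmetric condition, so this suffices.

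\textbf{The $\friendsR_R$ case.} I would reduce it to the first case using $\friendsR_R=\friendsL_{R^{-1}}$ and the fact that $R^{-1}$ is again a collusion (the lemmas cited in \Cref{th:new-balance}). Since $R^{-1}$ is also irreflexive, it is an irreflexive collusion, and $R^{-1}\cup (R^{-1})^{-1}=R\cup R^{-1}$. Applying the first case to $R^{-1}$ then gives the claim.

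The only step needing care is the negative direction $yRx$. There one must use that membership in different classes is symmetric, rather than re-deriving it from $[x]_R$. This is routine, so I expect no real obstacle.
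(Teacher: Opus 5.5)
Your proposal is correct and follows essentially the same route as the paper. The paper's proof just invokes the generalized weak balance theorem for the frame whose negative relation is $R$, and then ``completes'' it with the reversed links $R^{-1}$ to make the negative relation symmetric; you additionally make explicit what the paper leaves implicit, namely that the added edges $yRx$ preserve non-overlapping and are still separated by the partition $\{[z]_R\}_{z\in X}$, and that the second structure follows by applying the first case to the irreflexive collusion $R^{-1}$.
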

\begin{proof}
    By~\Cref{th:new-balance}, $\cF= \langle X, \friendsL_R, R \rangle$
    satisfies the conditions to be weakly balanced.
    What we do now is to ``complete'' the frame with the missing 
    links (i.e., $R^{-1}$) to make $\rminus$ symmetric.
\end{proof}

\paragraph{The (strong) balance case.}
We can show that configuration (c) in~\Cref{fig:balance} is impossible 
if $R$, modeling the negative relation,  is a symmetric irreflexive collusion. The next theorem 
shows something stronger: the absence of cycles 
in $R$ of odd length. 

\begin{theorem}\label{th:odd}
Let $R$ be a symmetric, irreflexive collusion. Then $R$ cannot form cycles of odd length. 
\end{theorem}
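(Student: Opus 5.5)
The idea is to show that $R$, viewed as an undirected graph (which is legitimate since $R$ is symmetric), is bipartite, with a two-colouring coming from the targeting classes of \Cref{th:irr-col}. Every cycle then alternates between the two colours and has even length.

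\textbf{Key lemma.} If $xRy$ and $yRz$, then $[x]_R=[z]_R$.
\begin{itemize}
\item By symmetry $yRx$ and $yRz$, so $x,z\in[y]_R$.
\item For any $w$ with $xRw$, apply collusiveness to $yRx$, $yRz$, $zRy$ (symmetry) and $xRw$ to get $zRw$. Concretely: from $zRy$, $xRy$ (symmetry) and $xRw$, with the pattern $(xRy\wedge xRw\wedge zRy)\imp zRw$, we obtain $zRw$.
\item Hence $[x]_R\subseteq[z]_R$, and the reverse inclusion holds by the symmetric argument.
\end{itemize}
Equivalently: $R$-neighbours of a common node have the same neighbourhood, so walks of length two preserve the class.

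\textbf{Also needed.} If $xRy$, then $[x]_R\neq[y]_R$. Indeed $y\in[x]_R$, while $y\notin[y]_R$ because $R$ is irreflexive.

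\textbf{Argument.} Suppose $x_0Rx_1R\cdots Rx_{2k}Rx_0$ is a cycle of odd length $2k+1$.
\begin{itemize}
\item Applying the lemma repeatedly along steps of two gives $[x_0]_R=[x_2]_R=\cdots=[x_{2k}]_R$.
\item The closing edge $x_{2k}Rx_0$ gives $x_0\in[x_{2k}]_R=[x_0]_R$, i.e.\ $x_0Rx_0$.
\item This contradicts irreflexivity.
\end{itemize}
The argument never uses the partition statement of \Cref{th:irr-col}, so totality and surjectivity are not even needed; collusiveness, symmetry and irreflexivity suffice.

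\textbf{Main obstacle.} The only delicate step is choosing the right instantiation of the collusive quadrangle in the lemma, because symmetry must be used to reorient the edges. Everything after that is a routine induction on the length of the walk.
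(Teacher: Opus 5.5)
Your proof is correct and is essentially the paper's argument. The paper inducts on $n$, using exactly your collusive instance $(xRy\wedge xRw\wedge zRy)\rightarrow zRw$ to replace the three-edge segment $w,x,y,z$ of a $(2m+3)$-cycle by the single edge $zRw$; this shortens the odd cycle by two until a self-loop contradicts irreflexivity. You package the same step as the invariant that vertices two steps apart have equal targeting classes, and chain it around the cycle instead of shrinking it, which works verbatim for odd closed walks and avoids checking that the shortened walk is still a cycle (as you note, neither argument needs totality or surjectivity).
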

\begin{proof}
  We show by induction on $n$ that $\forall n\!\in\!\mathbb{N}$, there is no cycle having length $2n+1$.
The case  n=0 is trivial  since there is no cycle of length 1 since $R$ is irreflexive. 
%
Case $n=m+1$.
The induction hypothesis is that there is no cycle with length $2m+1$.
To reach a contradiction, let us assume that there exists an odd cycle $c$ of
length $2n+1$. Let $x$ be an element in $c$. We consider an
arbitrary orientation of $c$ and then the path $p$ from $x$ to $x$ ``through''
$c$. Note that the length of $p$ is  $2(m+1)+1$
and 
there exists an oriented subpath $p'$ of $p$, with length
$2m$, where  $p=(x,y)(y,z)p'(w,x)$.  Since $R$
is symmetric, 
we  have  $(x,y), (z,y), (x,w)\!\in\!R$. Moreover, since $R$ is 
collusive, we also have $(z,w)\!\in\!R$.
Therefore, 
the path $(z,w)p'$ is a cycle of  length $2m+1$, which contradicts our induction hypothesis. 
\end{proof}

It is easy to see that configuration (d) in~\Cref{fig:balance}  is not possible if
the positive relation is an equivalence relation. In the absence of
configurations (c) and (d), we can form a balanced frame as follows. 

\begin{theorem}\label{th:str-balance}
 If $R$ is a symmetric irreflexive collusion, then $\langle X,  \friends_R, R\rangle $ is a (strong) balanced frame. 
\end{theorem}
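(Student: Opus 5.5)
To prove the statement, we will show three things about $\mathcal{F}=\langle X,\friends_R,R\rangle$ with $\rplus=\friends_R$ and $\rminus=R$: it is a symmetric signed frame, and it satisfies both clauses of the local balance property of \Cref{def:local-balance}.

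\textbf{Frame conditions.} These come directly from earlier results. By \Cref{fr:equiv}, $\friends_R$ is an equivalence relation, so it is reflexive and symmetric; $R$ is symmetric by hypothesis. Non-overlapping is the argument from the proof of \Cref{th:new-balance}: if $x\friends_R y$ and $xRy$, then symmetry gives $yRx$, and equi-targeting gives $xRx$, which contradicts irreflexivity. Alternatively, we can simply invoke \Cref{th:new-balance}, which already gives clause (2) of local balance and clause (1) for the positive-positive case.

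\textbf{Local balance.} Clause (1) in the case $x\friends_R y\wedge y\friends_R z$ is transitivity of $\friends_R$. For clause (2), take $x\friends_R y$ and $yRz$. Symmetry gives $zRy$, hence $zRx$ by equi-targeting, hence $xRz$. The case $xRy$ and $y\friends_R z$ is symmetric to this one.

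\textbf{Main step: two negative edges.} We must show that $xRy$ and $yRz$ imply $x\friends_R z$, i.e. for every $w$, $wRx\iff wRz$. Suppose $wRx$. Symmetry gives $yRx$, $yRz$ and $wRx$. Collusivity $\mathbf{Q_3}$, instantiated with the common target $x$ (from $y$ and $w$) and the further target $z$ of $y$, yields $wRz$. The converse direction is obtained by swapping the roles of $x$ and $z$, using $zRy$ and $yRx$. This is the configuration ``the enemy of my enemy is my friend.''

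The only delicate point is lining up the variables of $\mathbf{Q_3}$ correctly, so that symmetry turns the path $x R y R z$ into a shared attacker $y$ of both $x$ and $z$. As a consistency check, \Cref{th:odd} guarantees that $xRz$ cannot also hold, since that would form a triangle, i.e. a cycle of odd length; this agrees with the non-overlapping property already proved.
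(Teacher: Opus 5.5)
Your proof is correct, but it takes a different route from the paper.

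\textbf{The paper's route.} The paper checks the frame conditions exactly as you do. It then argues by excluding forbidden triangles. Configuration (c) is ruled out by \Cref{th:odd}: a symmetric irreflexive collusion has no odd cycles, which is proved by induction on cycle length. Configuration (d) is ruled out by transitivity of $\friends_R$ together with non-overlapping.

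\textbf{Your route.} You verify the implications of \Cref{def:local-balance} directly. The ``enemy of my enemy'' clause comes from a single instance of collusivity, once symmetry turns the path $xRyRz$ into a common attacker $y$ of $x$ and $z$; this gives $xRy\wedge yRz \Rightarrow x\friends_R z$. The mixed clause comes straight from the definition of $\friends_R$.

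\textbf{What each buys.} Your argument is more elementary, since it does not need \Cref{th:odd}. It also proves slightly more. \Cref{def:local-balance} requires the third edge of every two-edge path to \emph{exist} with the right sign. Excluding (c) and (d) triads only constrains triangles whose three edges are all present, which suffices only for collectively connected frames. The frame $\langle X,\friends_R,R\rangle$ need not be collectively connected: for $R=\{(1,2),(2,1),(3,4),(4,3)\}$, $\friends_R$ is the identity and $1,3$ are unrelated. In exchange, the paper's route connects the result to the global cycle condition of the Balance Theorem, of which \Cref{th:odd} is the all-negative special case.
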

\begin{proof} 
    $\friends_R$ is reflexive and symmetric (\Cref{fr:equiv}).
    $R$ is symmetric. If $x\friends_R y$ and $xRy$, we have $xRx$, which is not possible
    since $R$ is irreflexive. This shows that $\friends_R$ and $R$ are
    non-overlapping. Hence,   $(X, \friends_R, R)$ is a signed frame in the
    usual sense. By~\Cref{th:odd}, 
    there are no (c) triads.
     Moreover, 
    by non-overlapping and transitivity of $\friends_R$, 
    triads of type (d) are not possible. Hence, every triad satisfies the (local) balance
    condition. 
\end{proof}

%% file: logic.tex
In this section, we give a modal characterization of collusive relations. 
For that, we consider a multi-modal logic with two modal operators, namely $\RBOX$
and $\LBOX$. As usual, these connectives, together with negation, give rise to
the dual operators $\RDIAM$ and $\LDIAM$. We then show how the quadrangular
relations studied here induce inference rules for a cut-free labeled sequent
calculus, which we use to establish some properties of balanced frames.

As usual~\cite{DBLP:books/cu/BlackburnRV01}, 
modal formulas are interpreted in a Kripke structure
$\cM = \langle X,R,V \rangle$, where $X$ is a non-empty set of words, $R\subseteq X\times X$, and 
$V$ is a labeling function interpreting
the atomic symbols at each world in $X$. 
The semantics is the standard one and we
only present the case for the modal connectives:
\[
    \begin{array}{lll}
        \cM,x \models \RBOX \phi & \mbox{ iff } & \mbox{ for all } x' \mbox{ s.t. } xRx'\mbox{, } \cM,x' \models \phi\\
        \cM,x \models \LBOX \phi & \mbox{ iff } & \mbox{ for all } x' \mbox{ s.t. } x'Rx\mbox{, } \cM,x' \models \phi
    \end{array}
\]

The forcing relation is extended to frames as usual: 
Given a frame $\cF=\langle X,R\rangle$, 
$\cF \models \phi$ iff 
for all valuation $V$ and world $x\in X$, $\langle X,R,V\rangle, x\models \phi$. 

It is clear from the above semantic rules 
that $\RBOX$ is interpreted on the
relation $R$ (by considering all the $R$-successors) while $\LBOX$ is
interpreted on $R^{-1}$ (by considering all the $R$-predecessors). 
These modalities allow us to 
give a modal characterization of collusive frames
via the following axiom scheme: 
\[
    \mbox{Axiom }\axC:\RDIAM \LDIAM \RBOX A \imp \RBOX A
\]

\begin{theorem}[Modal characterization]\label{th:modal:char}
    Let
     $\cF=\langle X, R\rangle $ be a frame. 
    $R$ is collusive iff $\cF \models \axC$.
\end{theorem}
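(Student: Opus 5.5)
We prove the two directions separately, following the usual Sahlqvist-style correspondence argument. First we unfold the semantics of the antecedent. We have $\cM,x\models\RDIAM\LDIAM\RBOX A$ iff there exist $y,w$ with $xRy$, $wRy$, and $\cM,w\models\RBOX A$. In words, some $y$ is attacked by $x$ and is also a target of some $w$ all of whose $R$-successors satisfy $A$. The consequent says that every $R$-successor $z$ of $x$ satisfies $A$.

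($\Rightarrow$) Assume $R$ is collusive, and fix a valuation $V$ and a world $x$ with $\cM,x\models\RDIAM\LDIAM\RBOX A$. We obtain $y,w$ as above. Take any $z$ with $xRz$. From $xRy$, $xRz$ and $wRy$, the collusiveness condition $\mathbf{Q_3}$ (instantiated literally with the same letters) gives $wRz$. Since $\cM,w\models\RBOX A$, we get $\cM,z\models A$. Hence $\cM,x\models\RBOX A$, and the axiom $\axC$ is valid on $\cF$.

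($\Leftarrow$) Assume $\cF\models\axC$, and suppose towards a contradiction that there are $x,y,z,w$ with $xRy$, $xRz$, $wRy$ but $\neg wRz$. We use the standard minimal valuation: take $A$ to be an atom $p$ and set $V(p)=\{u\in X\mid wRu\}$.
\begin{itemize}
\item Then $\cM,w\models\RBOX p$ trivially.
\item Since $wRy$, we get $\cM,y\models\LDIAM\RBOX p$.
\item Since $xRy$, we get $\cM,x\models\RDIAM\LDIAM\RBOX p$.
\end{itemize}
By $\axC$, $\cM,x\models\RBOX p$. Because $xRz$, this gives $z\in V(p)$, i.e.\ $wRz$, which contradicts the assumption.

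The only delicate step is choosing the valuation in the ($\Leftarrow$) direction: it must make $\RBOX p$ true at $w$ while keeping $p$ as small as possible, which is exactly the successor set $[w]_R$ from \Cref{eq:rels}. We must also make sure the variable roles match $\mathbf{Q_3}$, namely that $x$ and $w$ share the target $y$ and that the conclusion is $wRz$. Everything else is routine unfolding of the Kripke semantics, and no earlier result beyond the definitions is needed.
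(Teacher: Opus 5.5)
Your proof is correct and uses the same semantic correspondence argument as the paper. The ($\Rightarrow$) direction is identical. In the ($\Leftarrow$) direction you refute $\axC$ directly with the valuation $V(p)=[w]_R$, the least one making $\RBOX p$ true at $w$. The paper instead passes to the equivalent scheme $\RDIAM A\imp\RBOX\LBOX\RDIAM A$ and makes $q$ true only at the single world that the second source fails to target; the two choices are dual and equally standard.
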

\begin{proof}
($\Rightarrow$ side). Let $R$ collusive, $x\in X$, $V$ any valuation, 
and $\cM =\langle X,R, V \rangle$. Assume also that
$\cM,x \models \RDIAM \LDIAM \RBOX A$
and let us show that 
$\cM,x \models \RBOX A$. 
Since $\cM,x \models \RDIAM \LDIAM \RBOX A$, 
there exist  
$y,z\in X$ s.t. $xR y$, $zRy$ and $z,\cM \models \RBOX A$. 
We conclude by noticing that for all $w\in X$, $zRw$ iff $xRw$ (since $R$ is collusive)
iff 
$\cM,w \models A$ (since $\cM,z \models \RBOX A$)
iff $ \cM,x \models \RBOX A$ (by definition). 

\noindent($\Leftarrow$ side). 
Let $\phi'$ be the axiom scheme 
$\RDIAM A \imp \RBOX \LBOX \RDIAM A$ (the contrapositive of $\axC$). 
To find a contradiction, let $x,y,z,w\in X$ and assume that  $xRy$, $zRy$, $zRw$ 
and $\neg xR w$ (i.e., $R$ is not collusive). Let $q$ be an atomic proposition and 
assume a valuation $V$ that makes $q$ true only at $w$. 
In the model $\cM=\langle X,R,V\rangle$, we have 
$\cM,z \models \RDIAM q$. 
However, 
$\cM,z\not\models \RBOX\LBOX\RDIAM q$,
since $z R y$, $xR y$ and 
$\cM,x \not\models \RDIAM q$.
\end{proof}

\subsection{Proof Theory for Collusions}
Label sequent calculi internalize the Kripke semantics of modal logics into 
the inference rules. Hence, 
 the forcing relation $\cM,x \models A$  is ``replaced'' with an internal relation written as $\lf{x}{A}$ ($A$ holds at world $x$). 
 Sequents
take the form $\Gamma \vdash \Delta$ where $\Gamma$ and $\Delta$ are sets of
relational atoms of the form $\rel{x}{y}$ and labeled formulas of the form
$\lf{x}{A}$. 

In~\cite{DBLP:journals/jphil/Negri05}, it is shown how to automatically obtain
for some classes of axioms, including geometric implications, label systems for
extensions of the modal logic K with such axioms. The resulting calculus
exhibits good properties, including admissibility of the structural rules (when
the modal logic is an extension of classical logic),
context independence of rules, and admissibility of the cut rule.

\begin{observation}
    All the properties in~\Cref{def:rel} as well as 
    reflexivity, symmetry, totality and surjectivity are 
    all geometric implications~\cite{DBLP:journals/jphil/Negri05}.
\end{observation}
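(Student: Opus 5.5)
The plan is to recall Negri's notion of a geometric implication and check each property against it syntactically. A geometric implication is a first-order sentence of the form
\[
\forall \vec{z}\,\big(P_1\wedge\cdots\wedge P_m \imp \exists \vec{y}_1 M_1 \vee\cdots\vee \exists \vec{y}_k M_k\big),
\]
where the $P_i$ are atoms and each $M_j$ is a conjunction of atoms. The degenerate cases are allowed: $m=0$ (empty antecedent, i.e.\ $\top$), $k=0$ (consequent $\bot$), and empty $\vec{y}_j$. For each listed property we exhibit it in this form.

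We treat the four families in turn.
\begin{itemize}
\item[(i)] Each quadrangular property of \Cref{def:rel} is, by definition, a universal closure of an implication whose antecedent is a conjunction of three atoms $A_1,A_2,A_3$, each of the form $vRv'$. Its consequent is the single atom $A_4$. So it is geometric with $m=3$, $k=1$ and no existential quantifier. This covers all eight $Q_i$ at once, and in particular confluence, co-confluence and collusiveness.
\item[(ii)] Reflexivity $\forall x.\,xRx$ is the geometric implication $\forall x(\top\imp xRx)$, with $m=0$ and $k=1$.
\item[(iii)] Symmetry $\forall x,y(xRy\imp yRx)$ has one atom in the antecedent and one atom in the consequent.
\item[(iv)] Totality (seriality) $\forall x\,\exists y.\,xRy$ and surjectivity $\forall x\,\exists y.\,yRx$ are $\forall x(\top\imp\exists y\,M)$ with $M$ a single atom. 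They are therefore geometric with one existential disjunct. In Negri's terminology they are the ``generalized right'' cases that yield rules with an eigenvariable condition.
\end{itemize}

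No step is a genuine obstacle, since everything is a direct syntactic match. The only point needing care is that the definition of geometric implication admits the empty antecedent and existential consequents. I would cite the exact formulation in~\cite{DBLP:journals/jphil/Negri05} to justify that reflexivity, totality and surjectivity fall under the scheme. I would also note that irreflexivity and anti-transitivity, which are not in the list, are geometric too, with consequent $\bot$.
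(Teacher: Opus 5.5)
Your proposal is correct, and it is exactly the routine syntactic check that the paper leaves implicit. The paper states this observation without proof and relies on the citation to Negri; each listed property visibly has the form $\forall\vec z\,(P_1\wedge\cdots\wedge P_m\rightarrow\exists\vec y_1M_1\vee\cdots\vee\exists\vec y_kM_k)$, as you show. One minor quibble: ``generalized right'' is not an apt label for totality and surjectivity, since they yield \emph{left} rules with a fresh-variable condition, as the rules $\mathit{total}$ and $\mathit{surj}$ in Figure~\ref{fig:ext} show.
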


Following the procedure described
in~\cite{DBLP:journals/jphil/Negri05,DBLP:journals/apal/MarinMPV22} to
transform universal/geometric axioms into rules, we obtain  different
extensions of the system G3K for the modal logic K considering the modality
$\RBOX$ and the ``reverse'' modality $\LBOX$.

\begin{definition}[System $\syspast$]

	The rules for $\syspast$ are in~\Cref{fig:K}. In~\Cref{fig:ext}
we present the resulting relational rules for the properties of totality,
surjectivity, collusiveness, reflexivity and symmetry (for an arbitrary relation $R$),  and non-overlapping 
 and collectively connectedness  wrt two relations $R$ and $R'$. 

\end{definition}


\begin{figure}
\centering
\resizebox{.84\textwidth}{!}{
	$\begin{array}{cccc}
        \infer[I]{\seqGD{,\lf{x}{p}}{,\lf{x}{p}}}{}
        \qquad
        \infer[I]{\seqGD{,\rel{x}{y}}{,\rel{x}{y}}}{}
		\quad
            \infer[\wedge_L]{\seqGD{,x:A\wedge B}{}}{\seqGD{,\lf{x}{A}, \lf{x}{B}}{}}
             \quad 
            \infer[\wedge_R]{\seqGD{}{,\lf{x}{A\wedge B}}}{
            \deduce{\seqGD{}{,\lf{x}{A}}}{} &
            \deduce{\seqGD{}{,\lf{x}{B}}}{} 
            }
            \\\\
            \infer[\neg_L]{\seqGD{,\lf{x}{\neg A}}{}}{
                \seqGD{}{,\lf{x}{A}}
            }
            \quad
            \infer[\neg_R]{\seqGD{}{,\lf{x}{\neg A}}}{
                \seqGD{,\lf{x}{A}}{}
            }
            \quad
            \infer[\bot_L]{\seqGD{,\lf{x}{\bot}}{}}{}
\\\\
            \infer[\RBOX_L]{\seqGD{,\lf{x}{\RBOX A}, \rel{x}{y}}{}}{
                \seqGD{,\lf{y}{A},\lf{x}{\RBOX A}, \rel{x}{y}}{}
            }
            \qquad 
            \infer[\RBOX_R]{\seqGD{}{,\lf{x}{\RBOX A}}{}}{
                \seqGD{,\rel{x}{y}}{,\lf{y}{A}}
            }
            \quad
            \infer[\RDIAM_L]{\seqGD{,\lf{x}{\RDIAM A}}{}}{
                \seqGD{,\rel{x}{y}, \lf{y}{A}}{}
            }
            \qquad 
            \infer[\RDIAM_R]{\seqGD{,\rel{x}{y}}{,\lf{x}{\RDIAM A}}}{
                \seqGD{,\rel{x}{y}}{,\lf{x}{\RDIAM A},\lf{y}{ A}}
            }
			\\\\
            \infer[\LBOX_L]{\seqGD{,\lf{x}{\LBOX A}, \rel{y}{x}}{}}{
                \seqGD{,\lf{y}{A},\lf{x}{\LBOX A}, \rel{y}{x}}{}
            }
            \qquad 
            \infer[\LBOX_R]{\seqGD{}{,\lf{x}{\LBOX A}}{}}{
                \seqGD{,\rel{y}{x}}{,\lf{y}{A}}
            }
            \quad
            \infer[\LDIAM_L]{\seqGD{,\lf{x}{\LDIAM A}}{}}{
                \seqGD{,\rel{y}{x}, \lf{y}{A}}{}
            }
            \qquad 
            \infer[\LDIAM_R]{\seqGD{,\rel{y}{x}}{,\lf{x}{\LDIAM A}}}{
                \seqGD{,\rel{y}{x}}{,\lf{x}{\LDIAM A},\lf{y}{ A}}
            }
        \end{array}
    $}
    \caption{System G3K (rules for $\vee$ and $\imp$ omitted). In rules $\RBOX_R$, $\LBOX_R$, $\RDIAM_L$, and $\LDIAM_L$,  $y$ is fresh. \label{fig:K}}
\end{figure}

\begin{figure}
\centering
\resizebox{.80\textwidth}{!}{
	$\begin{array}{cccc}
            \infer[total]{\seqGD{}{}}{\seqGD{,\rel{x}{y}}{}}
            \quad
            \infer[\mathit{surj}]{\seqGD{}{}}{\seqGD{,\rel{y}{x}}{}}
            \quad
            \infer[\mathit{collusive}]{\seqGD{,\rel{x}{y},\rel{x'}{y},\rel{x}{z}}{}}{
                \seqGD{,\rel{x}{y},\rel{x'}{y},\rel{x}{z},\rel{x'}{z}}{}
            }
			\quad
            \infer[\mathit{refl}]{\seq{\Gamma}{\Delta}}{\seq{\Gamma, \rel{x}{x}}{\Delta}}
			\\\\
			\infer[symm]{\seq{\Gamma,\rel{x}{y}}{\Delta}}{
				\seq{\Gamma,\rel{y}{x}}{\Delta}}
		\quad
\infer[nover]{\seq{\Gamma, x Ry, x R'y }{\Delta}}{
}
	\quad
    \infer[cc]{\seq{\Gamma}{\Delta}}{
        \deduce{\Gamma, xR y \vdash \Delta}{}
        &
        \deduce{\Gamma, xR' y \vdash \Delta}{}
    }
\end{array}
$
}
\caption{Relational rules. In rules $total$
and $\mathit{surj}$, $y$ is fresh. Rule $\mathit{nover}$ (resp. $\mathit{cc}$)
corresponds to non-overlapping (resp. collectively correctness) for a relation $R$ 
w.r.t. a relation  $R'$ \label{fig:ext}.}
\end{figure}

In what follows, we write $\syspast_{S}$, where $S$ is a set of axioms, to
denote the extension of $\syspast$ with the corresponding rules. From
\cite{DBLP:journals/apal/MarinMPV22}, we know that extending $\syspast$ with
any subset of the considered axioms preserves cut-admissibility.

Now we illustrate the system $\syspast_{S}$ by giving an alternative proof of some of the
results presented in previous sections. 

\begin{example}
We can show that: 
    1) The axiom  scheme  $\axC$  is provable in $\syspast_{\{\mathit{collusive}\}}$; and 
	2) The system $\syspast_{\{\mathit{refl},\mathit{collusive}\}}$ can prove the modal axioms 
    $B$ (symmetry, $A \Rightarrow \RBOX \RDIAM A$), and $4$ (transitivity $\RBOX A \Rightarrow \RBOX \RBOX  A$)  i.e., 
	reflexive collusive relations are equivalence relations.
\end{example}
\begin{proof}
    We prove below each statement. Note that 
due to cut-admissibility, and invertibility of rules, the end sequents in each case 
 cannot be proved without using the relational rules in the 
extended system.\\

\begin{minipage}[t]{0.30\textwidth}
        \resizebox{\textwidth}{!}{
    $
        \infer[\Rightarrow_R]{\seq{}{\lf{x}{\RDIAM\LDIAM\RBOX A\Rightarrow\RBOX A}}}{
            \infer[\RBOX_R]{\seq{\lf{x}{\RDIAM\LDIAM\RBOX A}}{\lf{x}{\RBOX A}}}{
                \infer[\RDIAM_L]{\seq{\rel{x}{y}, \lf{x}{\RDIAM\LDIAM\RBOX A}}{\lf{y}{A}}}{
                    \infer[\LDIAM_L]{{\seq{\rel{x}{z}, \rel{x}{y}, \lf{z}{\LDIAM\RBOX A}}{\lf{y}{A}}}}{
                        \infer[\mathit{coll.}]{{\seq{\rel{x'}{z},\rel{x}{z}, \rel{x}{y}, \lf{x'}{\RBOX A}}{\lf{y}{A}}}}{
                            \infer[\RBOX_L]{{\seq{\rel{x'}{y}, \rel{x'}{z},\rel{x}{z}, \rel{x}{y}, \lf{x'}{\RBOX A}}{\lf{y}{A}}}}{
                                \infer[I]{{\seq{\cdots,\lf{x'}{\RBOX A},\lf{y}{A}}{\lf{y}{A}}}}{}
                            }
                        }
                    }
                }
            }
        }
    $}
    \end{minipage}
    \hfill
    \begin{minipage}[t]{0.3\textwidth}
        \centering
        \resizebox{\textwidth}{!}{
$
\infer[\Rightarrow_R]{\vdash x: A \Rightarrow \RBOX \RDIAM A}{
	\infer[\RBOX_R]{x:A \vdash x:\RBOX\RDIAM A}{
		\infer=[2\times\mathit{refl}]{\rel{x}{y}, x:A  \vdash y:\RDIAM A}{
			\infer[coll.]{\rel{x}{y}, \rel{x}{x}, \rel{y}{y}, x:A \vdash y:\RDIAM A}{
				\infer[\RDIAM_R]{\rel{y}{x}, \rel{x}{y}, \rel{x}{x}, \rel{y}{y}, x:A \vdash y:\RDIAM A}{
					\infer[I]{\cdots, x:A \vdash x: A}{
					}
				}
			}
		}
	}
}
$
}
    \end{minipage}
    \hfill
    \begin{minipage}[t]{0.3\textwidth}
        \centering
        \resizebox{\textwidth}{!}{
$
\infer[\Rightarrow_R]{\vdash x: \RBOX A \Rightarrow \RBOX \RBOX  A}{
	\infer=[2\times \RBOX_R]{x : \RBOX A \vdash x:  \RBOX \RBOX  A}{
		\infer[\mathit{refl}]{ \rel{x}{y}, \rel{y}{z}, x : \RBOX A\vdash z:    A}{
			\infer[coll.]{ \rel{y}{y},\rel{x}{y}, \rel{y}{z}, x : \RBOX A\vdash z: A}{
				\infer[\RBOX_L]{ \rel{x}{z}, \rel{x}{y}, \rel{y}{z}, \rel{y}{y}, x : \RBOX A\vdash z: A}{
					\infer[I]{ \cdots,x: \RBOX A, z : A\vdash z: A}{
					}
				}
			}
		}
	}
}
$
}
    \end{minipage}

\end{proof}

The authors in \cite{DBLP:journals/logcom/PedersenSA21}
give a modal characterization of the properties of balance and weak balance
in symmetric signed frames. 
For weak balance, the proposed axiom scheme 
is 
\[\axW: (\dplus \dplus A \imp \dplus A)\wedge ((\dplus\dminus A \vee \dminus\dplus A) \imp \dminus A )
\]
The semantics of the modal operators $\dplus$ and $\dminus$ 
is given in models of the form $\cM = \langle X, \rplus,\rminus, V\rangle$:
\[
    \begin{array}{lll}
        \cM ,x \models \dplus \phi & \mbox{ iff } & \mbox{ there exists } x' \mbox{ s.t. } x\rplus x'\mbox{ and  } \cM,x' \models \phi\\
        \cM ,x \models \dminus \phi & \mbox{ iff } & \mbox{ there exists } x' \mbox{ s.t. } x\rminus x' \mbox{ and  } \cM,x' \models \phi
    \end{array}
\]

Next we show  that  axiom $\axW$  can be proved  in the
extension of $\syspast$ 
with rules corresponding to the properties of
symmetric signed frames, and the rule enforcing that 
$\rplus$ is collusive (see \Cref{th:new-weak-balance}). 

\begin{example}
    Consider the system $\syspast_S$ where $S$ includes: 
    reflexivity of $\rplus$, symmetry of $\rplus$ and $\rminus$, 
    collusiveness of $\rplus$, and 
    non-overlapping and collectively connectedness of 
    $\rplus$ wrt $\rminus$. The system $\syspast_S$ 
     proves the axiom $\axW$. 
\end{example}
\begin{proof}
    We show below the derivation for the second conjunct in $\axW$:


    \begin{center}
        \resizebox{.95\textwidth}{!}{
    $
    \infer[\Rightarrow_R]{\vdash x: (\dplus\dminus A \vee \dminus\dplus A) \Rightarrow \dminus A}{
        \infer[\vee_L]{ x: \dplus\dminus A \vee \dminus\dplus A \vdash x:\dminus A}{ 
            \infer[\dplus_L]{x:\dplus\dminus A \vdash x:\dminus A}{
                \infer[\dminus_L]{x\rplus y, y:\dminus A \vdash x:\dminus A}{
                    \infer[cc]{y\rminus z, x\rplus y, z: A \vdash x:\dminus A}{
                        \infer[\dminus_R]{x\rminus z, \cdots z: A \vdash x:\dminus A}{
                            \infer[I]{\cdots z: A \vdash z: A}{}
                        }
                        &
                        \infer=[\mathit{symm}\times 2]{x\rplus z, y\rminus z, x\rplus y\cdots \vdash x:\dminus A}{
                            \infer[\mathit{refl}]{z\rplus x, y\rminus z, y\rplus x\cdots \vdash x:\dminus A}{
                                \infer[\mathit{coll}]{z\rplus z, z\rplus x, y\rminus z, y\rplus x, \cdots \vdash x:\dminus A}{
                                    \infer[\mathit{noover}]{ y\rplus z,  y\rminus z \cdots  \vdash x:\dminus A}{
                                    }
                                }
                            }
                        }
                    }
                }
            }
            &
            \infer[\dminus_L]{x:\dminus\dplus A \vdash x:\dminus A}{
                \infer[\dplus_L]{x\rminus y, y:\dplus A \vdash x:\dminus A}{
                    \infer[cc]{x\rminus y, y\rplus z,  z: A \vdash x:\dminus A}{
                        \infer[\dminus_R]{z\rminus x, \cdots,   z: A \vdash x:\dminus A}{
                            \infer[I]{ \cdots  z: A \vdash z: A}{}
                        }
            &
            \infer[\mathit{symm}]{z\rplus x, x\rminus y, y\rplus z\cdots \vdash x:\dminus A}{
                \infer[\mathit{refl}]{x\rplus z, x\rminus y, y\rplus z\cdots \vdash x:\dminus A}{
                    \infer[\mathit{coll}]{y\rplus y, x\rplus z, x\rminus y, y\rplus z\cdots \vdash x:\dminus A}{
                        \infer[\mathit{nover}]{ x\rplus y,  x\rminus y \cdots \vdash x:\dminus A}{}
                    }
                }
            }
                    }
                }
            }
        }
    }
    $
}
    \end{center}
\end{proof}

We can also  prove that the axiom characterizing balance
in~\cite{DBLP:journals/logcom/PedersenSA21} is provable in the system with
rules enforcing that both  $\rplus$ and $\rminus$ are collusive (see
\Cref{th:new-balance-strong}). Since the axioms
in~\cite{DBLP:journals/logcom/PedersenSA21} are also geometric, we can go in the other direction 
and show that the resulting system extended with the rules for those axioms can prove  the axioms of
collusiveness for $\rplus$ (and for $\rplus$ and $\rminus$ in the case of
strong balance).

%% file: conclusion.tex
In~\cite{joinet:hal-02369662,Joinet2021}, the agonistic (i.e., non-reflexive)
reading of non-equivalence collusions was explicitly introduced, giving rise to
the term \emph{collusion}. These works primarily aimed to show how collusion
theory extends the logical theory of abstraction~--based on equivalence
relations--~developed in the late 19th century (notably the \emph{definitions by
abstraction} of the Peano School, also known as Frege–Russell’s
\emph{abstraction principles}).
In this paper, we have established a strong connection between collusive relations and the
property of balance in networks. In addition to characterizing this property in
terms of collusive relations, we generalize the balance theorem by considering 
the non-symmetric case.
We conjecture that irreflexive collusive relations that are furthermore
confluent and co-confluent do characterize the (strong) balance property.
This would provide an additional generalization of our results, since in the
symmetric case, collusiveness, confluence, and co-confluence collapse, thereby
extending~\Cref{th:str-balance}.

Finally, we plan to investigate extensions of the dialogical semantic games
associated with the modal logic introduced here, in the spirit
of~\cite{DBLP:conf/lpar/FreimanOPF24}. This is motivated by the fact that the games 
 in ~\cite{DBLP:conf/lpar/FreimanOPF24} assume that 
the negative and positive relations are symmetric.